\newtheorem{theorem}{Theorem}
\newtheorem{lemma}[theorem]{Lemma}
\theoremstyle{remark}
\newtheorem*{remark}{Remark}
\newcommand{\myStrut}{\parbox{0.11 pt}{\rule{0 ex}{4.5 ex}}}
\newcommand{\myStrutB}{\parbox{0.11 pt}{\rule{0 ex}{8.5 ex}}}
\newcommand{\myStrutC}{\parbox{0.11 pt}{\rule{0 ex}{4.9 ex}}}
\newcommand{\be}{\begin{equation}}
\newcommand{\ee}{\end{equation}}
\newcommand{\bc}{\begin{center}}
\newcommand{\ec}{\end{center}}
\newcommand{\bmt}{\begin{pmatrix}}
\newcommand{\emt}{\end{pmatrix}}
\newcommand{\tbins}[2]{\parbox{#1 in}{\bc #2\ec}}
\newcommand{\urc}{\mathcal{C}_d}
\newcommand{\ure}{\mathcal{E}_d}
\newcommand{\rc}{\mathcal{C}^{\mathrm{r}}_d}
\newcommand{\re}{\mathcal{E}^{\mathrm{r}}_d}
\DeclareMathOperator{\ord}{ord}
\DeclareMathOperator{\tr}{tr}
\DeclareMathOperator{\Tr}{Tr}
\DeclareMathOperator{\SL}{SL}
\DeclareMathOperator{\ESL}{ESL}
\begin{document}
\begin{titlepage}
\begin{center}
\bfseries  PROPERTIES OF THE EXTENDED CLIFFORD GROUP WITH APPLICATIONS TO SIC-POVMS AND MUBS.
\end{center}
\vspace{1 cm}
\begin{center} D M APPLEBY
\end{center}
\begin{center} Perimeter Institute for Theoretical Physics \\ Waterloo, Ontario N2L 2Y5, Canada
 \end{center}
\vspace{0.5 cm}
\begin{center}
  (E-mail:  mappleby@perimeterinstitute.ca)
\end{center}
\vspace{0.75 cm}
\vspace{1.25 cm}
\begin{center}
\vspace{0.35 cm}
\parbox{12 cm }{We consider a version of the extended Clifford Group  which is defined in terms of a finite Galois field in odd prime power dimension.  We show that Neuhauser's result, that with the appropriate choice of phases the standard (or metaplectic) representation of the discrete symplectic group is faithful (as opposed to merely projective), also holds for the anti-unitary operators of the extended group.  We also improve on Neuhauser's result by giving explicit formulae which enable one easily to calculate the  (anti-)unitary corresponding to an arbitrary (anti-)symplectic matrix.  We then go on to find the eigenvalues of an arbitrary (anti-)symplectic matrix.  The fact that in prime power dimension the matrix elements belong to a field means that this can be done using  the same techniques which are used to find the eigenvalues of a matrix defined over the reals---including the use of an extension field (the analogue of the complex numbers) when the eigenvalues are not in the base field.  With the eigenvalues of the (anti-)symplectic matrix in hand it is straightforward to find the eigenvalues, the order and all roots of the corresponding (anti-)unitary.   We then give an application of these results to SIC-POVMs (symmetric informationally complete positive operator valued measures).  We show that in prime dimension our results can be used to find a natural basis for the eigenspace of the Zauner unitary in which SIC-fiducials are expected to lie. We go on to use the extension field to construct a parameterization of the displacement operators analagous to the one commonly employed in quantum optics, using the complex eigenvalues of the annihilation operator.   Finally, we apply our results to the MUB cycling problem.  Wootters and Sussman have shown that in every even prime power dimension there is a single Clifford unitary which cycles through a full set of MUBs (mutually unbiassed bases).  We show that in odd prime power dimension $d$, although there is no Clifford unitary, there is a Clifford anti-unitary which cycles through the full set of  Wootters-Fields MUBs if $d=3$ (mod $4$).  Also, irrespective of whether $d=1$ or $3$ (mod $4$), the Wootters-Fields MUBs split into two groups of $(d+1)/2$ bases in such a way that there is a single Clifford unitary which cycles through each group separately.}
\end{center}
\end{titlepage}

{\allowdisplaybreaks
\section{Introduction}
Since its original introduction into the field of quantum information the Clifford group~~\cite{Gottesman1, Gottesman2,Gottesman3, Neuhauser, VourdasErrB, VourdasA, Dehaene,vanDenNestD, VourdasErrA,VourdasB,WoottersC,selfB,vanDenNest,Hostens,vanDenNestC, vanDenNestB, GrossB, Galvao,Paz, VourdasC,CormickEtAl, Gross, Flammia, VourdasD, WoottersA, Sussman, GrossC, selfC, elliot, KiblerF} has found numerous applications.  Our own interest in the group, and the extended Clifford Group~\cite{selfB,selfC} of which it is a subgroup, is due to their role in the theory of mutually unbiassed bases (MUBs)~\cite{WoottersC,Galvao,Paz,CormickEtAl,WoottersA,Sussman,selfC,Ivanovic,WoottersD,WoottersB,Bandy,Lawrence,Chaturvedi,Archer,PittA,KlapRot,DurtB,GrasslA,KlimovA,KlimovB,Planat,KlappA,Ingemar2,Durt,PittB,Wocjan,GodsilRoy,WoottersE,KiblerA,KiblerB,KlimovC,Scott,KlimovD,MubsD6,  BrierleyA,KiblerE,selfE,KiblerC,BrierleyB,BrierleyC} and symmetric, informationally complete positive operator valued measures (SIC-POVMs)~\cite{selfB,GrossB,Flammia,GrasslA,KlappA,GodsilRoy,WoottersE,Scott,KiblerE,selfE,Hoggar,Zauner,Caves,FuchsA,  Renes,  SanigaPlanatRosu,FuchsB, Rehacek, KlappB,GrasslB,Colin, Kim,me2,selfA,Khat,Ingemar3,ScottGrassl}.  However, the results we derive along the way may be of  wider interest. 

We should remark at the outset that it is possible to identify at least three different (though closely related) mathematical constructs which commonly go by the name ``Clifford group''.  We will clarify the precise sense in which we are using the term  in Section~\ref{sec:prelim} (see paragraph following Eq.~(\ref{eq:dLinDef})).  For now we will confine ourselves to saying that we are concerned with a version  of the Clifford group which is defined over a finite field~\cite{LidNie}, and which therefore  only exists in prime power dimension.  There are some significant complications when the dimension is a power of $2$, so in this paper we confine ourselves to the case of odd prime power dimension.  We hope to treat the even prime power case in a subsequent paper.

In the first part of the paper (sections~\ref{sec:prelim} to~\ref{sec:orders}) we prove a number of general results concerning the  Clifford and extended Clifford groups.    In the second part we apply these results to a problem which arises in the theory of SIC-POVMs and to the MUB cycling problem. We also show how they can be used to construct an alternative parameterization of the displacement operators.    In a subsequent paper we will use them to complete the proof that the minimum uncertainty states introduced by Wootters and Sussman~\cite{WoottersA,Sussman} and Appleby, Dang and Fuchs~\cite{selfA} exist in every prime power dimension.

The Clifford group is built out of two kinds of operator:  displacement operators $D_{\mathbf{u}}$, labelled by   vectors of the form
\be
\mathbf{u} = \bmt u_1 \\ u_2 \emt
\ee
with components in the finite field $\mathbb{F}_d$ ($d$ being the Hilbert space dimension), and symplectic unitaries $U_F$, labelled by  $2\times 2$ matrices of the form
\be
F = \bmt \alpha & \beta \\ \gamma & \delta \emt
\ee
with elements in $\mathbb{F}_d$ and determinant $=1$.  The set of all such matrices constitutes the (discrete) symplectic group, $\SL(2,\mathbb{F}_d)$.  The symplectic unitaries permute the displacement operators according to the prescription
\be
U^{\vphantom{\dagger}}_F D^{\vphantom{\dagger}}_{\mathbf{u}} U^{\dagger}_F =  D^{\vphantom{\dagger}}_{F\mathbf{u}}
\ee
Explicit expressions for the operators $D_{\mathbf{u}}$ and $U_F$ will be given below.  

The extended Clifford group is built out of the above operators together with anti-symplectic anti-unitaries $U_F$, for which the matrix $F$ has determinant $=-1$. We refer to the set of all $2\times 2$ matrices with either sign of the determinant  as the extended symplectic group $\ESL(2,\mathbb{F}_d)$.   

After reviewing what is already known on this subject in Section~\ref{sec:prelim} we begin our analysis in Section~\ref{sec:Faithful} by showing that the phases of the operators  $U_F$ can be chosen in such a way that the map $F\to U_F$ (the metaplectic representation~\cite{Neuhauser,Gross}) becomes a faithful representation of the group $\ESL(2,\mathbb{F}_d)$, as opposed to one that is merely projective.  In other words
\be
U_{F_1} U_{F_2} = U_{F_1F_2}
\ee
for all $F_1$, $F_2$.  The fact that this is possible for the group $\SL(2,\mathbb{F}_d)$ was shown by Neuhauser~\cite{Neuhauser}.  We establish that the result extends to  $\ESL(2,\mathbb{F}_d)$.  We also improve on Neuhauser's result by giving explicit formulae which enable one easily to calculate $U_F$ for any given matrix $F$.  In fact we give two different sets of formulae for, in addition to giving explicit expressions for the matrix elements, we show (in Section~\ref{sec:UFtermsDu}) that $U_F$ can be simply expressed as a linear combination of displacement operators.

We next (Section~\ref{sec:orders})  consider the problem of finding the order of an arbitrary symplectic/anti-symplectic operator $U_F$ (\emph{i.e.} the smallest positive integer $m$ such that $U^m_F =1$),  its eigenvalues and eigenspaces, and its roots. To solve this problem we make essential use of the fact that $\mathbb{F}_d$ is a field.  This means that we can use exactly the same techniques to diagonalize a matrix

\be
F=\bmt \alpha & \beta \\ \gamma & \delta \emt
\ee
with elements $\in \mathbb{F}_d$ that we would if its elements were real numbers.   In particular we can use the trick of going to an extension field.  If it should happen that the characteristic equation of a matrix over the reals has no real solutions, then we simply embed the reals in the field of complex numbers.  Similarly here:  if the matrix $F$ has no eigenvalues in $\mathbb{F}_{d}$ we can  embed $\mathbb{F}_d$ in the extension field $\mathbb{F}_{d^2}$.  Once one knows the eigenvalues, finding the order is a comparatively simple matter.  We give explicit formulae which enable one easily to calculate the order of an arbitrary matrix $F\in\ESL(2,\mathbb{F}_d)$.   With these formulae in hand it is  straightforward to calculate the order, eigenspaces and eigenvalues, and  roots of the corresponding unitary/anti-unitary $U_F$.  

In Section~\ref{sec:SICs} we apply the methods developed in the previous sections to a problem which arises in theory of SIC-POVMs.  Scott and Grassl's~\cite{ScottGrassl} recent exhaustive numerical investigation encourages the speculation that, in all dimensions, every Weyl-Heisenberg covariant SIC-fiducial is an eigenvector of a canonical order $3$ Clifford unitary~\cite{selfB}. Generally speaking this unitary belongs to a different version of the Clifford group than the one considered here.  However, if the dimension is prime (as opposed to a power $>1$ of a prime) the two versions are the same, and the results obtained in this paper become applicable.  We show how they can be used to find a natural basis for the eigenspace of the order $3$ symplectic unitary in which SIC-fiducials are expected to lie.

In Section~\ref{sec:FsAsPerms} we describe an alternative labelling of the displacement operators.  In quantum optics it is usual to parameterise the (continuous variable) displacement operators with the complex variable
\be
\alpha = \frac{1}{\sqrt{2}} (q + i p)
\ee 
(where $q$, $p$ are the quadratures).  A similar construction can be carried through in the discrete case, using the extension field $\mathbb{F}_{d^2}$ instead of $\mathbb{C}$. Taking the discrete logarithm we obtain an integer labelling of the displacement operators, which leads to a natural way of representing $\ESL(2,\mathbb{F}_d)$ as a group of permutation matrices.   This result is needed in Section~\ref{sec:reLabelling2}, but it may be of some independent interest.

Finally, in Sections~\ref{sec:MUBPrelim} to~\ref{sec:reLabelling2} we consider the MUB-cycling problem.  Wootters and Sussman~\cite{WoottersA,Sussman} have shown that in every dimension equal to a power of $2$ there exists a single Clifford unitary which cycles through a full set of  MUBs.  We investigate the situation in odd prime power dimension.  We show that in this case there is no single Clifford unitary which cycles through the full set of Wootters-Fields  MUBs~\cite{WoottersB} (at least for the version of the Clifford group considered here---see below).  However, if $d=3$ (mod $4$) there is a single \emph{anti}-Clifford \emph{anti}-unitary which cycles through them.  Furthermore, it is possible, for all odd prime power $d$,  to split the Wootters-Fields MUBs into two groups of $(d+1)/2$ bases each in such a way that there is a single Clifford unitary which cycles through each group separately (we say that a unitary with this property is half-cycling).  This leads to  a natural labelling scheme for the MUBs, in which a cycling anti-unitary (when it exists) increases the integer index by $1$ and a half-cycling unitary increases it by $2$.  

Wootters and Sussman~\cite{WoottersA} used the existence of cycling unitaries in even prime power dimension to prove the existence of  minimum uncertainty states  in all such dimensions.  Sussman~\cite{Sussman} subsequently extended the proof to an infinite subset of the prime power dimensions $=3$ (mod $4$).  In a subsequent paper we will use the results obtained in Sections~\ref{sec:MUBPrelim} to~\ref{sec:reLabelling2} of this paper to complete the proof, and to show that minimum uncertainty states exist in all prime power dimensions, without exception. 

\section{Preliminaries}
\label{sec:prelim}
We begin by reviewing some relevant definitions and known facts concerning Galoisian variants of the Clifford and extended Clifford groups~\cite{Neuhauser,VourdasErrB,VourdasA,VourdasErrA,VourdasB,WoottersC,GrossB,Paz,VourdasC,Gross,VourdasD,WoottersA,Sussman,selfC,WoottersB,PittA,KlapRot,KlimovA,Planat,Durt,PittB,GodsilRoy,KlimovC,KlimovD}.  This will also allow us to clarify the relation between the version of the Clifford group considered in this paper and  definitions used by other authors.

Let us begin by defining the Galoisian variant of the Weyl-Heisenberg group (or generalized Pauli group).  The ordinary Weyl-Heisenberg group is defined by choosing an orthonormal basis $|0\rangle, |1\rangle, \dots , |d-1\rangle$ in $d$-dimensional Hilbert space  and then defining operators $X$ and $Z$ by
\begin{align}
X | x\rangle & = | x+1 \rangle
\\
Z | x \rangle & = \omega^{x} | x \rangle 
\end{align}
where $\omega = e^{\frac{2 \pi i}{d}}$ and addition of indices is mod $d$.  If $d$ is odd the Weyl-Heisenberg group is the group generated by these operators (in even dimension there is a slight complication---see, for example, ref.~\cite{selfB}).  In dimension $d=p^n$, where $p$ is a prime number and $n$ is a positive integer, the Galoisian variant of the Weyl-Heisenberg group is defined similarly except that instead of labelling the orthonormal basis by the integers mod $d$ one labels them by the elements of the finite field $\mathbb{F}_d$ (for a summary account of the aspects of the theory of finite fields which are relevant here see, for example,  Vourdas~\cite{VourdasC}, Pittenger and Rubin~\cite{PittA} or Klimov \emph{et al}~\cite{KlimovC};   for a more comprehensive treatment see, for example, Lidl and Niederreiter~\cite{LidNie}).  One then defines, for all $x, u\in\mathbb{F}_d$,
\begin{align}
X_u | x\rangle & = | x+u \rangle
\label{eq:GalXopDef}
\\
Z_u | x \rangle & = \omega^{\tr(xu)} | x \rangle 
\label{eq:GalZopDef}
\end{align}
where $\omega=e^{\frac{2 \pi i}{p}}$ (observe that $\omega$ is now a $p^{\mathrm{th}}$ root of unity, not a $d^{\mathrm{th}}$ root of unity as in the case of the ordinary variant), and where  $\tr \colon \mathbb{F}_d\to \mathbb{Z}_p$ is the field theoretic trace defined by
\be
\tr (x) = \sum_{r=0}^{n-1} x^{(p^r)}
\ee
(making the natural identification of $\mathbb{Z}_p$, the integers modulo $p$, with the set $\{z\in \mathbb{F}_d \colon z^p = z\}$).  Notice that if $n>1$ we cannot write $X_u = X^u$, $Z_u = Z^u$ because it does, in general, make no sense to raise a Hilbert space operator to the power of an element of a finite field.  However if $n=1$ we can identify  $\mathbb{F}_d$  with $\mathbb{Z}_p$, which means we can make these replacements, and the Galoisian variant of the Weyl-Heisenberg group becomes identical with the ordinary variant.

Next, for each vector $\mathbf{u}=(u_1,u_2) \in \mathbb{F}^2_d$, define the displacement operator
\be
D_{\mathbf{u}} = \tau^{\tr(u_1 u_2)} X_{u_1} Z_{u_2}
\ee
where $\tau= \omega^{\frac{p+1}{2}}$.  Note that $\tau^2 = \omega$ and that $\tau$, like $\omega$, is a $p^{\mathrm{th}}$ root of unity.  
We have, as an immediate consequence of this definition,
\begin{align}
D^{\dagger}_{\mathbf{u}} &= D^{\vphantom{\dagger}}_{-\mathbf{u}}
\\
\intertext{and}
D_{\mathbf{u}} D_{\mathbf{v}} &= \tau^{\langle \mathbf{u},\mathbf{v}\rangle } D_{\mathbf{u}+\mathbf{v}}
\label{eq:DopProdForm}
\end{align}
where $\langle \mathbf{u},\mathbf{v}\rangle$ is the symplectic form
\be
\langle \mathbf{u},\mathbf{v}\rangle = \tr(u_2 v_1 - u_1 v_2)
\ee
It follows that 
\begin{align}
\left(D_{\mathbf{u}}\right)^k & = D_{k \mathbf{r}}
\\
\intertext{for every integer $k$ and all $\mathbf{u} \in \mathbb{F}_d$.  In particular}
\left(D_{\mathbf{u}}\right)^p & = 1
\end{align}
So the displacement operators are all order $p$ (apart from $D_{\boldsymbol{0}}$).  

It can be seen from the above that the set of operators $\{\omega^{m} D_{\mathbf{u}}\colon \mathbf{u} \in \mathbb{F}_d \text{, } m \in \mathbb{Z}_p\}$ constitutes a group, which is what we are calling the Galoisian variant of the Weyl-Heisenberg group, and which we will denote $\mathcal{W}_d$.  

Many authors define $\mathcal{W}_d$ in a way that may look, on the face of it, rather different, as a tensor product of $n$ copies of the ordinary variant of the Weyl-Heisenberg group in dimension $p$ (the ``many-particle'' definition, in Gross's~~\cite{GrossB,Gross,GrossC,GrossP} terminology).  To see that this definition is in fact equivalent to ours let $e_r$ be any basis for the field $\mathbb{F}_d$, and let $\bar{e}_r$ be the dual basis.  So
\be
\tr(e_r \bar{e}_s ) = \delta_{r,s}
\ee
for all $r$, $s$.  For arbitrary $x\in \mathbb{F}_d$ let $x_r = \tr(x \bar{e}_r)$ (respectively $\bar{x}_r=\tr(x e_r)$) be its expansion coefficients relative to the basis $e_r$ (respectively $\bar{e}_r$).  So
\be
x = \sum_{r=1}^{n} x_r e_r = \sum_{r=1}^{n} \bar{x}_r \bar{e}_r
\ee
Let $\mathcal{H}_p$ (respectively $\mathcal{H}_d$) be $p$-dimensional (respectively $d$-dimensional) Hilbert space, and let $|0\rangle, \dots , |p-1\rangle$ be the standard basis for $\mathcal{H}_p$.  Let $D^p_{\mathbf{u}}$ be the ordinary Weyl-Heisenberg displacement operators in dimension $p$.  So
\be
D^p_{\mathbf{u}} |k\rangle= \tau^{u_1u_2+2 k u_2}|k+u_1\rangle
\ee
 Finally, let $S\colon \mathcal{H}_d \to \mathcal{H}_p \otimes \dots \mathcal{H}_p$ be the linear isomorphism whose action on the basis states is
\be
S |x\rangle = |x_1\rangle \otimes \dots \otimes |x_n\rangle
\ee
Then 
\be
D_{\mathbf{u}}=D_{(u_1,u_2)}=S^{-1} \left( D^p_{(u_{11},\bar{u}_{21})} \otimes \dots \otimes D^{p}_{(u_{1n},\bar{u}_{2n})} \right)S 
\label{eq:WdTermsWpTensored}
\ee
for all $\mathbf{u}$ (where $u_{1r}=\tr(u_1 \bar{e}_r)$,  $\bar{u}_{2r}=\tr(u_2e_r)$).  This defines an isomorphism of the groups  $\mathcal{W}_p \otimes \dots \otimes \mathcal{W}_p$ and $\mathcal{W}_d$.  Note, however, that it is not a \emph{natural} isomorphism (because there are many pairs of dual bases for the field $\mathbb{F}_d$).

These definitions (the one we gave earlier, and the one which consists in identifying $\mathcal{W}_d$ with the tensor product $\mathcal{W}_p \otimes \dots \otimes \mathcal{W}_p$) both have their advantages.  The tensor product definition is, perhaps, the  more natural of the two in a situation such as occurs when one has a register consisting of $n$ different $p$-dits. In a situation like that the tensor product structure  reflects the actual physics.  Also the tensor product definition may be found attractive because it makes no  reference to the mathematics of Galois fields, which is unfamiliar to many physicists.    However, those advantages partly disappear if one is interested in a single qudit which doesn't naturally split into a product of $n$ different $p$-dits.   In that case the imposition of an (arbitrary) tensor product structure introduces a needless complication.  Moreover, even in a situation where the system of interest does in fact consist of $n$  distinct $p$-dimensional subsystems one might want to make use of the properties of Galois fields.    In a case like that, given that one is going to have to introduce them eventually, it would seem to  make sense to introduce Galois fields right at the outset, in the definition itself.  In this paper we are going to make very heavy use of Galois fields.  We consequently prefer our definition.

Let us now turn to the definition of the Galoisian variant of the Clifford group.  Here again we have a choice between two alternative definitions.  However, whereas in the case of $\mathcal{W}_d$ it is essentially a matter of taste which  definition one prefers, here the choice is substantive in that the groups defined are non-isomorphic.  Let us begin with what we will call the unrestricted (Galoisian) Clifford group, $\urc$. This is the set of  unitaries $U$ which reshuffle and re-phase the displacement operators, according to the prescription
\be
U D_{\mathbf{u}} U^{\dagger} = e^{ig(\mathbf{u})} D_{f(\mathbf{u})}
\ee
for all $\mathbf{u}$ and suitable $U$-dependent functions $f, g$.  Since $D^p_{\mathbf{u}} = 1$ for all $\mathbf{u}$ the phase $e^{ig(\mathbf{p})}$ must be a $p^{\mathrm{th}}$ root of unity. So we must in fact have
\be
U D_{\mathbf{u}} U^{\dagger} = \omega^{h(\mathbf{u})} D_{f(\mathbf{u})}
\ee
for some function $h$ taking values in $\mathbb{Z}_p$.  In other words $\urc$ is  the normalizer of $\mathcal{W}_d$ (\emph{i.e.} it is the set of unitaries $U$ such that $U \mathcal{W}_d U^{\dagger} = \mathcal{W}_d$).  

 It follows from  Eq.~(\ref{eq:DopProdForm}) that, for any such $U$, we must have
\be
 \tau^{2 h(\mathbf{u})+2 h(\mathbf{v})+\langle f(\mathbf{u}),f(\mathbf{v})\rangle} D_{f(\mathbf{u}) + f(\mathbf{v})}
=
UD_{\mathbf{u}}D_{\mathbf{v}} U^{\dagger}
=
\tau^{2 h(\mathbf{u}+\mathbf{v})+\langle \mathbf{u},\mathbf{v}\rangle} D_{f(\mathbf{u}+\mathbf{v})}
\label{eq:CliffDefA}
\ee
for all $\mathbf{u},\mathbf{v}$.
So the function $f$ must have the property
\be
f(\mathbf{u}+\mathbf{v}) = f(\mathbf{u}) + f(\mathbf{v})
\ee
 In particular
\be
f(m\mathbf{u}) = f(\mathbf{u} + \dots + \mathbf{u}) = m f(\mathbf{u})
\ee
for all $m \in \mathbb{Z}_p$.  We will refer to a function with this property as $p$-linear. 

 We  define the restricted (Galoisian) Clifford group $\rc$ to be the subgroup of $\urc$ comprising those unitaries which satisfy the additional requirement that $f$ be, not merely $p$-linear, but $d$-linear in the sense
\be
f(x\mathbf{u}) = x f(\mathbf{u}) 
\label{eq:dLinDef}
\ee
for all $x\in \mathbb{F}_d$.  

Some authors (for example Cormick \emph{et al}~\cite{CormickEtAl}) use the term ``Clifford Group'' to mean $\urc$; others (for example Vourdas~\cite{VourdasD}) use it to mean $\rc$.  Gross~\cite{GrossB,Gross,GrossC,GrossP} has proposed to resolve the ambiguity by  referring to $\urc$ as the ``many-particle'' Clifford group, and to $\rc$ as the ``single-particle'' Clifford group.  We have not adopted Gross's terminology ourselves because it seems to us that the mathematical distinction between the groups $\urc$ and $\rc$ has nothing specially to do with the   distinction between composite and non-composite physical systems.  Suppose we have a non-composite physical system, and suppose we adopt the ``single-particle'' definition (in Gross's terminology) of $\mathcal{W}_d$, as represented by Eqs.~(\ref{eq:GalXopDef}) and~(\ref{eq:GalZopDef}).  Then we will still be faced with the fact that the normalizer of $\mathcal{W}_d$ is $\urc$, not $\rc$.  So $\urc$ is no less relevant to single-particle systems than it is to many-particle ones.  The reason is that the distinction between $p$-linearity and $d$-linearity does not depend on the assumption of a tensor product structure.  It therefore seems to us that our terminology, referring to $\urc$ as the unrestricted group and to $\rc$ as the restricted group, is more appropriate.

In this paper we will focus on the  restricted  group, $\rc$.  
This is not because we consider $\rc$ to be of greater physical and/or mathematical importance than  $\urc$.  Nor is it because we are more interested in single-particle systems than many-particle ones.    It is just that the $d$-linear transformations of  $\rc$ have a number of nice properties  which make them worthy of separate investigation. In particular, they have all the properties needed to prove the existence of minimum uncertainty states in every odd prime power dimension (which was the problem that originally motivated this research).    We have therefore chosen to focus on $\rc$ in this paper, and to defer a discussion of $\urc$ to what we hope will be another, subsequent paper.  

If it is assumed that $f$ is $d$-linear Eq.~(\ref{eq:CliffDefA}) implies
\begin{enumerate}
\item[(a)] The function $f$ is of the form 
\be
f(\mathbf{u}) = F \mathbf{u}
\ee
where $F\in\SL(2,\mathbb{F}_d)$---\emph{i.e.} is a $2\times 2$ matrix
\be
F = \bmt \alpha & \beta \\ \gamma & \delta \emt
\ee
with entries in $\mathbb{F}_d$ and such that $\det F = 1$.
\item[(b)] The function $g$ is of the form 
\be
g(\mathbf{u}) = \langle \boldsymbol{\chi},\mathbf{u} \rangle
\ee
for some fixed vector $\boldsymbol{\chi}\in (\mathbb{F}_d)^2$.
\end{enumerate}
We refer to the matrices in $\SL(2,\mathbb{F}_d)$ as symplectic matrices.

Let us sketch the proof of these statements. It is easily seen that $\langle F \mathbf{u},F\mathbf{v}\rangle=\tr\bigl( \det F (u_2 v_1-u_1 v_2)\bigr)$.  In view of Eq.~(\ref{eq:CliffDefA}) this means
\begin{align}
2h(\mathbf{u}+\mathbf{v}) -2 h(\mathbf{u}) -2h(\mathbf{v}) &= \tr\bigl(( \det F -1)(u_2 v_1-u_1 v_2)\bigr)
\label{eq:CliffDefB}
\\
\intertext{(mod $p$).  Interchanging $\mathbf{u}$ and $\mathbf{v}$ we also have}
2h(\mathbf{u}+\mathbf{v}) -2 h(\mathbf{u}) -2h(\mathbf{v}) &= -\tr\bigl(( \det F -1)(u_2 v_1-u_1 v_2)\bigr)
\end{align}
(mod $p$).  It follows that $\det F = 1$.
Setting $\det F=1$ in Eq.~(\ref{eq:CliffDefB}) we deduce that 
 $h$ is $p$-linear.   In terms of the dual bases introduced earlier we have
\be
\mathbf{u} = (u_1,u_2) = \sum_{r=1}^{n}\bigl( u_{1r}  (e_r, 0) + u_{2r} (0,e_r)\bigr)
\ee
where $u_{jr} = \tr(u_j \bar{e}_r)$. So if we define
\begin{align}
\boldsymbol{\chi} = \left(-\sum_{r=1}^{n} h\bigl( (0,e_r)\bigr)\bar{e}_r,\sum_{r=1}^{n} h\bigl( (e_r,0)\bigr) \bar{e}_r \right)
\end{align}
it follows from the $p$-linearity of $h$ that 
\be
h(\mathbf{u}) = \langle \boldsymbol{\chi},\mathbf{u}\rangle
\ee

This establishes that for every $U\in \rc$ there is a matrix $F\in \SL(2,\mathbb{F}_d)$ and vector $\boldsymbol{\chi}\in(\mathbb{F}_d)^2$ such that
\be
U D_{\mathbf{u}} U^{\dagger} = \omega^{\langle \boldsymbol{\chi},\mathbf{u}\rangle } D_{F\mathbf{u}}
\label{eq:cliffAction}
\ee
for all $\mathbf{u}$.
The converse is also true:  for each $F$ and $\boldsymbol{\chi}$ there is a corresponding  $U\in\rc$.  
In fact, it can be shown that for each $F\in\SL(2,\mathbb{F}_d)$ there is a unitary $U_F$ such that 
\be
U^{\vphantom{\dagger}}_F D^{\vphantom{\dagger}}_{\mathbf{u}} U^{\dagger}_F = D^{\vphantom{\dagger}}_{F\mathbf{u}}
\ee 
(in Section~\ref{sec:Faithful} we will give an explicit formula for $U_F$).  So the unitary $U=U_F D_{\boldsymbol{\chi}}$ will transform the displacement operators according to the prescription of Eq.~(\ref{eq:cliffAction}).  Moreover it is easily seen, in view of the irreducibility of the standard representation of $\mathcal{W}_d$ and Schur's lemma, that if $U'$ is any other unitary satisfying Eq.~(\ref{eq:cliffAction}) then $U'=e^{i\theta} U$ for some phase $e^{i\theta}$. In short $\rc$ consists of the set of unitaries of the form
\be
e^{i\theta} U_F D_{\boldsymbol{\chi}}
\ee 
for arbitrary $F\in \SL(2,\mathbb{F}_d)$, $\boldsymbol{\chi} \in (\mathbb{F}_d)^2$, $\theta \in \mathbb{R}$.  
The properties of the displacement operators are easily understood, so in the remainder of this paper most  of the work will go into understanding the properties of the symplectic unitaries $U_F$.

It is also interesting, and for our purposes important, to consider anti-unitary operators $U$ with the property
\be
U D_{\mathbf{u}} U^{\dagger} = e^{ig(\mathbf{u})} D_{f(\mathbf{u})}
\ee
for all $\mathbf{u}$ and suitable functions $f$, $g$.  As before the fact that $D^p_{\mathbf{u}} = 1$ means that $e^{ig(\mathbf{u})}$ must be a $p^{\mathrm{th}}$ root of unity.  So we must have
\be
U D_{\mathbf{u}} U^{\dagger} = \omega^{h(\mathbf{u})} D_{f(\mathbf{u})}
\ee
We will refer to such operators as anti-Clifford anti-unitaries.  The group which consists of  the Clifford unitaries $\in \urc$ together with  the anti-Clifford anti-unitaries just defined we will call the extended (Galoisian) Clifford group $\ure$.  

The description of $\ure$ parallels the discussion in Section $4$ of ref.~\cite{selfB}.  Let $J$ be the matrix
\be
J = \bmt 1 & 0 \\ 0 & -1\emt
\label{eq:JDefNew}
\ee
(so $\det J = -1$) and let $U_J$ be the anti-unitary which acts by complex conjugation in the standard basis:
\be
U_J \left(  \sum_{x\in\mathbb{F}_d} c^{\vphantom{*}}_x |x\rangle \right) = \sum_{x\in\mathbb{F}_d} c^{*}_x |x
\rangle 
\label{eq:UJDefNew}
\ee
Then it is easily seen that
\be
U^{\vphantom{\dagger}}_J D^{\vphantom{\dagger}}_{\mathbf{u}} U^{\dagger}_J = D^{\vphantom{\dagger}}_{J\mathbf{u}}
\ee
for all $\mathbf{u}$.  Now consider the unitary operator $V = UU_J $.   We have
\be
V D_{\mathbf{u}} V^{\dagger} = \omega^{h(J\mathbf{u})} D_{ f(J\mathbf{u})}
\ee
implying that $V\in\urc$. Conversely, if $V\in \urc$ then $ VU_J$ is an anti-Clifford anti-unitary.  So $\ure$ is the disjoint union
\be
\ure= \urc \cup \bigl(\urc U_J \bigr)
\ee

In this paper we will be concerned, not with the full group $\ure$, but with  the subgroup
\be
\re = \rc \cup \bigl( \rc U_J \bigr)
\ee 
We accordingly define $\ESL(2,\mathbb{F}_d)$ to be the group consisting  of all $2\times 2$ matrices
with entries in $\mathbb{F}_d$ and determinant $=\pm 1$. We refer to the matrices in $\ESL(2,\mathbb{F}_d)$ with  determinant $=-1$ as anti-symplectic matrices.  If $\det F=1$ let $U_F$ be the unitary defined earlier, while if $\det F = -1$ let it be the anti-unitary
\be
U_F = U_{FJ} U_J
\ee
Then it is easily seen that $\re$ consists of all operators of the form
\be
U=e^{i\theta} U_F D_{\boldsymbol{\chi}}
\ee
with $F \in \ESL(2,\mathbb{F}_d)$, $\boldsymbol{\chi}\in (\mathbb{F}_d)^2$ and $e^{i \theta}$ an arbitrary phase. If $\det F=1$ then $U$ is a Clifford unitary; if $\det F =-1$ then it is an anti-Clifford anti-unitary.  In either case
\be
U D_{\mathbf{u}} U^{\dagger} = \omega^{\langle \boldsymbol{\chi},\mathbf{u}\rangle} D_{F\mathbf{u}}
\ee 
for all $\mathbf{u}$.

\section{Faithful Representation of $\ESL(2,\mathbb{F}_d)$}
\label{sec:Faithful}
Let $F_1,F_2$ be any two elements of $\ESL(2,\mathbb{F}_d)$.  Then 
\be
U^{\vphantom{\dagger}}_{F_1} U^{\vphantom{\dagger}}_{F_2} D^{\vphantom{\dagger}}_{\mathbf{u}} U^{\dagger}_{F_2}U^{\dagger}_{F_1} = D^{\vphantom{\dagger}}_{F_1F_2\mathbf{u}} = U^{\vphantom{\dagger}}_{F_1F_2} D^{\vphantom{\dagger}}_{\mathbf{u}} U^{\dagger}_{F_1F_2}
\ee
for all $\mathbf{u}$.  It follows from the irreducibility of the standard representation of $\mathcal{W}_d$ and Schur's lemma that
\be
U_{F_1} U_{F_2} = e^{i\theta} U_{F_1F_2}
\ee
for some phase $e^{i\theta}$, meaning  that the map $F\to U_F$  is a projective representation of the group $\ESL(2,\mathbb{F}_d)$ (the metaplectic representation~\cite{Neuhauser,Gross}).  In this section we show that it is possible to choose the phases of the operators $U_F$ in such a way that the representation becomes, not merely projective, but faithful, so that
\be
U_{F_1}U_{F_2} = U_{F_1F_2}
\ee
for all $F_1$, $F_2$.  The fact that this is possible for the unitaries corresponding to the elements of $\SL(2,\mathbb{F}_d)$ has been shown by Neuhauser~\cite{Neuhauser}.  We improve on his result by (a) showing that it extends to the whole of the group $\ESL(2,\mathbb{F}_d)$ and (b) giving explicit formulae which enable one easily to calculate the unitary/anti-unitary corresponding to an arbitrary element of $\ESL(2,\mathbb{F}_d)$.

We begin by establishing some preliminary results.  Let $\mathbb{F}^{*}_d$ (respectively $\mathbb{Z}^{*}_p$) be the set of non-zero elements of $\mathbb{F}_d$ (respectively $\mathbb{Z}_p$).   Let $\theta$ be a primitive element for the field $\mathbb{F}_d$, and for each $x\in \mathbb{F}^{*}_d$ let $\log_{\theta} x$ be the unique integer  in the interval $[0,d-2]$ with the property
\be
x = \theta^{\log_{\theta} x} 
\ee
Let $Q$ (respectively $N$) denote the set of quadratic residues (respectively non-quadratic residues) of $\mathbb{F}_d$:
\begin{align}
Q & = \{x\in \mathbb{F}^{*}_d: x = y^2 \text{ for some } y\in\mathbb{F}_d\}
\\
N & = \{x\in \mathbb{F}^{*}_d: x \neq y^2 \text{ for all } y\in\mathbb{F}_d\}
\end{align}
Clearly $x\in Q$ (respectively $x \in N$) if and only if $\log_{\theta}x$ is even (respectively odd).  Consequently $Q$ and $N$ each contain exactly $(d-1)/2$ elements.
Let $l(x)$ be the quadratic character of $\mathbb{F}_x$
\be
l(x) 
=
\begin{cases}
1 \qquad & x\in Q
\\
-1 \qquad & x\in N
\\
0 \qquad & x=0
\end{cases}
\ee
and let $l_p(z)$ be the quadratic character of $\mathbb{Z}_p$ (\emph{i.e.} the Legendre symbol $\genfrac{(}{)}{1 pt}{1}{z}{p}$).  
Note that $l$, like $l_p$,  is multiplicative:  $l(xy) = l(x) l(y)$ for all $x,y\in\mathbb{F}^{*}_d$.   We then have the following result
\begin{lemma}
\label{lem:lAndlp}
Suppose $z\in \mathbb{Z}^{*}_p$.  Then 
\be
l(z) = 
\begin{cases}
1 \qquad & \text{if $d$ is an even power of $p$}
\\
l_p (z) \qquad & \text{if $d$ is an odd power of $p$}
\end{cases}
\ee
\end{lemma}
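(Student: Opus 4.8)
The plan is to count the number of nonzero squares in $\mathbb{Z}_p^{*}$ that remain squares inside the bigger field $\mathbb{F}_d$, using the cyclic structure of $\mathbb{F}_d^{*}$. Fix a primitive element $\theta$ of $\mathbb{F}_d$, so $\mathbb{F}_d^{*}$ is cyclic of order $d-1 = p^n - 1$. The subgroup $\mathbb{Z}_p^{*} \subset \mathbb{F}_d^{*}$ is the unique subgroup of order $p-1$, hence $\mathbb{Z}_p^{*} = \{\theta^{k(d-1)/(p-1)} : 0 \le k \le p-2\}$. Now $z = \theta^{m}$ lies in $Q$ (is a square in $\mathbb{F}_d$) iff $m$ is even, by the observation made just before the statement of the lemma. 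So the whole question reduces to: for $z \in \mathbb{Z}_p^{*}$ written as $z = \theta^{k(d-1)/(p-1)}$, when is the exponent $k\,(d-1)/(p-1)$ even?

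First I would treat the integer $N := (d-1)/(p-1) = 1 + p + p^2 + \dots + p^{n-1}$. Since $p$ is an odd prime, each $p^j$ is odd, so $N \equiv n \pmod 2$. This is the key arithmetic fact and it splits the argument in two:

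If $d$ is an \emph{even} power of $p$ (i.e.\ $n$ even), then $N$ is even, so $kN$ is even for every $k$, hence every element of $\mathbb{Z}_p^{*}$ is a square in $\mathbb{F}_d$ and $l(z) = 1$ for all $z \in \mathbb{Z}_p^{*}$. That disposes of the first case immediately.

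If $d$ is an \emph{odd} power of $p$ (i.e.\ $n$ odd), then $N$ is odd, so $kN \equiv k \pmod 2$, and therefore $z = \theta^{kN}$ is a square in $\mathbb{F}_d$ iff $k$ is even. On the other hand, the map $k \mapsto \theta^{kN}$ is an isomorphism from $\mathbb{Z}_{p-1}$ onto $\mathbb{Z}_p^{*}$, and under this isomorphism the image of $\theta^{N}$ is a generator of $\mathbb{Z}_p^{*}$, i.e.\ a primitive root mod $p$; so $z = \theta^{kN}$ is a quadratic residue mod $p$ (i.e.\ $l_p(z) = 1$) iff $k$ is even, by the analogous characterization of $Q$ inside $\mathbb{Z}_p$. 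Comparing, $z \in Q \iff k \text{ even} \iff l_p(z) = 1$, and since both $l$ and $l_p$ only take values $\pm 1$ on $\mathbb{Z}_p^{*}$, we get $l(z) = l_p(z)$ for all $z \in \mathbb{Z}_p^{*}$, as claimed.

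The only genuinely substantive point — and the one I would be most careful about — is identifying $\theta^{N}$ with a primitive root modulo $p$, i.e.\ checking that restricting the ``discrete log base $\theta$'' to the subgroup $\mathbb{Z}_p^{*}$ is, up to the scaling factor $N$, exactly the discrete log base a primitive root of $\mathbb{Z}_p$. This is where one must make sure the natural identification of $\mathbb{Z}_p$ with $\{z \in \mathbb{F}_d : z^p = z\}$ (used throughout the paper) is being applied consistently, so that ``square in $\mathbb{F}_d$'' and ``square in $\mathbb{Z}_p$'' are being compared correctly. Everything else is elementary parity bookkeeping with $N \equiv n \pmod 2$.
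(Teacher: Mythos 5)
Your proposal is correct and follows essentially the same route as the paper: the paper sets $\epsilon=\theta^{(d-1)/(p-1)}$, verifies it is a primitive element of $\mathbb{Z}_p$ (your ``$\theta^N$ is a primitive root mod $p$'' step), and then uses $\log_\theta z=N\log_\epsilon z$ together with the parity of $N=1+p+\dots+p^{n-1}\equiv n\pmod 2$, exactly as you do. The point you flag as the substantive one is handled in the paper by checking $\epsilon^p=\epsilon$ and that $\epsilon$ has multiplicative order $p-1$, which is the same identification you make via uniqueness of the order-$(p-1)$ subgroup of the cyclic group $\mathbb{F}_d^{*}$.
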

\begin{proof}
Define
\be
\epsilon = \theta^{\frac{d-1}{p-1}} = \theta^{1+p+ \dots + p^{n-1}}
\ee
Then $\epsilon^{p} = \epsilon$, implying that $\epsilon \in \mathbb{Z}_p$.  Moreover $\epsilon^{r} = 1$ if and only if $r=0$ (mod $p-1$), implying that $\epsilon$ has multiplicative order $p-1$.  So $\epsilon$ is a primitive element for $\mathbb{Z}_p$.  By construction
\be
\log_\theta z = (1+p+ \dots + p^{n-1}) \log_{\epsilon} z
\ee
for all $z\in \mathbb{Z}^{*}_p$.
If $n$ is even then $(1+p+ \dots + p^{n-1})$ is even, implying that $\log_\theta z$ is even for all $z\in \mathbb{Z}^{*}_p$.  Consequently $l(z)=1$ for all $z\in \mathbb{Z}^{*}_p$.
If, on the other hand, $n$ is odd then $(1+p+ \dots + p^{n-1})$ is odd, implying that $\log_\theta z$ is even if and only if $\log_{\epsilon} z$ is even.  Consequently $l(z) = l_{p}(z)$ for all  $z\in \mathbb{Z}^{*}_p$.
\end{proof}
It is also convenient to define the quantity
\be
\tilde{l}(x)
=
\begin{cases}
-i^{-\frac{n(p+3)}{2}} l(x) \qquad & x\neq 0
\\
1 \qquad & x = 0
\end{cases}
\ee
It has the property
\begin{lemma}
\label{lem:tilLConj}
For all $x\in\mathbb{F}_d$
\be
\bigl(\tilde{l}(x)\bigr)^{*} = \tilde{l}(-x)
\ee
\end{lemma}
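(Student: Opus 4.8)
The plan is to dispose of the trivial case, reduce the claim to a single arithmetic identity, and then verify that identity by a short $\pmod4$ analysis.

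If $x=0$ both sides equal $1$ by definition, so assume $x\neq0$. Since $p$ is odd, $p+3$ is even, so $m:=n(p+3)/2$ is an ordinary integer and $\tilde l(y)=-i^{-m}l(y)$ for every nonzero $y$. Taking complex conjugates and using that $l(x)=\pm1$ is real while $\overline{i^{-m}}=i^{m}$ gives $\bigl(\tilde l(x)\bigr)^{*}=-i^{m}l(x)$. On the other hand $\tilde l(-x)=-i^{-m}l(-x)$, and multiplicativity of $l$ together with $l(x)^2=1$ gives $l(-x)=l(-1)l(x)$. Hence, since $l(x)\neq0$, the assertion $\bigl(\tilde l(x)\bigr)^{*}=\tilde l(-x)$ is equivalent to
\be
l(-1)=i^{2m}=i^{n(p+3)}.
\ee

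It remains to establish $l(-1)=i^{n(p+3)}$. On the right-hand side I would simplify $i^{n(p+3)}=i^{n(p-1)}i^{4n}=i^{n(p-1)}=(-1)^{n(p-1)/2}$. On the left-hand side, since $\mathbb{F}^{*}_d$ is cyclic of order $d-1$ we have $-1=\theta^{(d-1)/2}$, so $l(-1)=(-1)^{(d-1)/2}=(-1)^{(p^n-1)/2}$. Thus it suffices to check $p^n-1\equiv n(p-1)\pmod4$, which I would verify in the two cases $p\equiv1\pmod4$ (both sides $\equiv0$) and $p\equiv3\pmod4$ (both sides $\equiv2$ if $n$ is odd and $\equiv0$ if $n$ is even, using $3^n\equiv3$ or $1\pmod4$ according as $n$ is odd or even). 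Alternatively one may invoke Lemma~\ref{lem:lAndlp} with $z=-1$: for $n$ even it gives $l(-1)=1=(-1)^{n(p-1)/2}$, and for $n$ odd it gives $l(-1)=l_p(-1)=(-1)^{(p-1)/2}=(-1)^{n(p-1)/2}$.

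I do not expect a genuine obstacle here. The only points requiring care are recognizing that the exponent $n(p+3)/2$ is an integer (so that conjugation simply flips the sign of the exponent of $i$), and carrying out the small $\pmod4$ case analysis behind $l(-1)=i^{n(p+3)}$; everything else is formal manipulation with the real, multiplicative quadratic character $l$.
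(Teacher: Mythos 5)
Your proof is correct and follows essentially the same route as the paper's: both arguments come down to the single identity $l(-1)=i^{n(p+3)}$, verified by a small case analysis. The only (minor) difference is that your primary verification obtains $l(-1)=(-1)^{(d-1)/2}$ directly from $-1=\theta^{(d-1)/2}$ and a $\pmod 4$ computation, whereas the paper routes through Lemma~\ref{lem:lAndlp} and the standard evaluation of $l_p(-1)$ --- a path you correctly note as an alternative.
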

\begin{proof}
Suppose $d$ is an even power of $p$.  Then it follows from Lemma~\ref{lem:lAndlp} that   $l(-1) = 1$.  Consequently $l(-x) = l(-1)l(x) = l(x)$.  Moreover, the fact that $n$ and $p+3$ are both even means that $i^{\frac{n(p+3)}{2}}$ is real.  The claim is now immediate.

Suppose, on the other hand, that $d$ is an odd power of p.  Then in view of Lemma~\ref{lem:lAndlp} and a standard result of number theory (see, for example, Hardy and Wright~\cite{Hardy}) we have 
\be
l(-x) = l_p(-1) l(x) = (-1)^{\frac{p+3}{2}} l(x) = (-1)^{-\frac{n(p+3)}{2}} l(x)=i^{-n (p+3)} l(x)
\ee
The claim now follows.
\end{proof}
We will also need the following  Gaussian sum formula
\begin{lemma}
\label{lem:gauss}
For all $a\in\mathbb{F}^{*}_d$ and $b\in\mathbb{F}_{d}$ 
\be
\sum_{x\in \mathbb{F}_d} \tau^{\tr(a x^2 + b x)} =\sqrt{d} \tilde{l}(a) \tau^{-\tr\left(\frac{b^2}{4 a}\right)}
\ee
\end{lemma}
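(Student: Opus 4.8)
The plan is to run the classical evaluation of a quadratic Gaussian sum, taking care that here the additive character is $x\mapsto\tau^{\tr(x)}$ rather than $x\mapsto\omega^{\tr(x)}$ (recall $\tau^2=\omega$). Since $p$ is odd, $2$ and $4$ are units in $\mathbb{F}_d$, so I would first complete the square, $a x^2+b x=a\bigl(x+\tfrac{b}{2a}\bigr)^2-\tfrac{b^2}{4a}$. Replacing the summation variable $x$ by $x-\tfrac{b}{2a}$ (a bijection of $\mathbb{F}_d$) pulls the factor $\tau^{-\tr(b^2/4a)}$ out of the sum and reduces the claim to the case $b=0$, i.e.\ to showing $G(a):=\sum_{x\in\mathbb{F}_d}\tau^{\tr(a x^2)}=\sqrt d\,\tilde l(a)$ for every $a\in\mathbb{F}^{*}_d$.

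Next I would convert $G(a)$ into a twisted character sum. The map $x\mapsto x^2$ has $1+l(t)$ preimages at each $t\neq0$ and one preimage at $t=0$, so $G(a)=\sum_{t\in\mathbb{F}_d}\tau^{\tr(a t)}+\sum_{t\in\mathbb{F}_d}l(t)\,\tau^{\tr(a t)}$. Here $x\mapsto\tau^{\tr(x)}$ is a nontrivial additive character of $\mathbb{F}_d$ (because $\tau$ is a primitive $p$-th root of unity and $\tr$ is onto $\mathbb{Z}_p$), so the first sum vanishes for $a\neq0$. In the second sum I substitute $t\to t/a$ and use multiplicativity of $l$ together with $l(a)^{-1}=l(a)$; this gives $G(a)=l(a)\,g$ with $g:=\sum_{t\in\mathbb{F}_d}l(t)\,\tau^{\tr(t)}$ the quadratic Gauss sum of $\mathbb{F}_d$. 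Since $\tilde l(a)=\tilde l(1)\,l(a)$ for $a\neq0$, the statement is now equivalent to the single identity $g=\tilde l(1)\,\sqrt d=-\,i^{-n(p+3)/2}\sqrt d$.

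Evaluating $g$ is the crux, and I expect it to absorb most of the work. First I would remove the $\tau$-versus-$\omega$ mismatch: since $\tau=\omega^{(p+1)/2}$ and $(p+1)/2\equiv 2^{-1}\pmod{p}$, one has $\tau^{\tr(t)}=\omega^{\tr(t/2)}$, and substituting $t\to 2t$ gives $g=l(2)\,g_0$ with $g_0:=\sum_{t\in\mathbb{F}_d}l(t)\,\omega^{\tr(t)}$ the Gauss sum for the canonical additive character. The value of $g_0$ I would import from the classical theory --- either by quoting the closed form for the quadratic Gauss sum over $\mathbb{F}_{p^n}$ directly (Lidl and Niederreiter~\cite{LidNie}), or by using the Davenport--Hasse relation to reduce to $\mathbb{F}_p$ and then invoking Gauss's theorem on the sign of the ordinary quadratic Gauss sum --- in either case one gets $g_0=(-1)^{n-1}\varepsilon_p^{\,n}\sqrt d$, where $\varepsilon_p=1$ if $p\equiv1\pmod{4}$ and $\varepsilon_p=i$ if $p\equiv3\pmod{4}$. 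It then remains to verify the purely arithmetic identity $l(2)\,(-1)^{n-1}\varepsilon_p^{\,n}=-\,i^{-n(p+3)/2}$. For this I would split on the parity of $n$, using Lemma~\ref{lem:lAndlp} to evaluate $l(2)$ (it is $1$ when $n$ is even, and the Legendre symbol $\bigl(\tfrac{2}{p}\bigr)=(-1)^{(p^2-1)/8}$, by e.g.~\cite{Hardy}, when $n$ is odd), and within the odd-$n$ case on the residue of $p$ modulo $8$; in each of the resulting cases both sides are explicit fourth roots of unity and the check is short. The main obstacle is therefore not a single hard step but rather (i) correctly tracking the factor $l(2)$ produced by passing from $\tau$ to $\omega$, and (ii) the bookkeeping over the even/odd-$n$ and mod-$4$ (resp.\ mod-$8$) cases needed to collapse everything into the compact form $\tilde l(1)$; the one genuinely nontrivial external input is the precise phase of the quadratic Gauss sum, supplied by Gauss's classical theorem (together with Davenport--Hasse when $n>1$).
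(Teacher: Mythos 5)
Your proposal is correct and follows essentially the same route as the paper: complete the square to reduce to $b=0$, convert $\sum_x\tau^{\tr(ax^2)}$ into the quadratic Gauss sum of $\mathbb{F}_d$, import its classical value (the paper cites Theorem 5.15 of Lidl--Niederreiter), absorb the $\tau$-versus-$\omega$ discrepancy into a factor $l(2)$ (equivalently the paper's $l(a/2)$), and match the result to $\tilde l(a)$ by splitting on the parity of $n$ and, for odd $n$, on $p$ bmod $8$. The only cosmetic differences are the order of the steps (the paper completes the square last) and that you derive the reduction to the twisted character sum from the preimage count of $x\mapsto x^2$ rather than quoting it.
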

\begin{remark}
The fact that we work in terms of $\tau$ rather than $\omega$ means that we can give a single formula which covers both the cases $p=\pm 1$ (mod $4$).
\end{remark}
\begin{proof}
Using Theorem 5.15 in Lidl and Niederreiter~\cite{LidNie} in conjunction with the fact that $\sum_{x\in\mathbb{F}_d} \omega^{\tr (x)} =0$ we find
\begin{align}
\sum_{x\in\mathbb{F}_d} \omega^{\tr(a x^2)}
& =
\begin{cases}
1+ 2 \sum_{x\in Q} \omega^{\tr(x)} \qquad & \text{if $a\in Q$}
\\
1+ 2 \sum_{x\in N} \omega^{\tr(x)} \qquad & \text{if $a\in N$}
\end{cases}
\\
& = l(a) \sum_{x\in\mathbb{F}_d} l(x) \omega^{\tr(x)}
\\
& =
\begin{cases}
(-1)^{n-1} l(a) \sqrt{d}  \qquad & \text{$p=1$ (mod $4$)}
\\
(-1)^{n-1} i^n l(a) \sqrt{d} \qquad & \text{$p=-1$ (mod $4$)}
\end{cases}
\end{align}
Since $(d+1)/2$ is the multiplicative inverse of $2$ considered as an element of $\mathbb{Z}_p$ we have $\tau^{\tr(ax^2)} = \omega^{\tr(ax^2/2)}$ and, consequently,
\be
\sum_{x\in \mathbb{F}_d} \tau^{\tr(ax^2)} =
\begin{cases}
(-1)^{n-1} l(a/2) \sqrt{d}  \qquad & \text{$p=1$ (mod $4$)}
\\
(-1)^{n-1} i^n l(a/2) \sqrt{d} \qquad & \text{$p=-1$ (mod $4$)}
\end{cases}
\ee

Suppose, now, that $n$ is even.  Then it follows from Lemma~\ref{lem:lAndlp} that $l(a/2)= l(a)$.  Moreover
\be
\tilde{l}(a) = 
\begin{cases}
(-1)^{n-1} l(a) \qquad & \text{$p=1$ (mod $4$)}
\\
(-1)^{n-1} i^n l(a) \qquad & \text{$p=-1$ (mod $4$)}
\end{cases}
\ee
Consequently
\be
\sum_{x\in \mathbb{F}_d} \tau^{\tr(ax^2)} =\sqrt{d}
\tilde{l}(a)
\ee
irrespective of the value of $p$.

Suppose, on the other hand, that $n$ is odd.  Then it follows from Lemma~\ref{lem:lAndlp} and a standard result of number theory (see, for example, Hardy and Wright~\cite{Hardy}) that
\be
l(a/2) = l_p(2) l(a) = 
\begin{cases}
l(a) \qquad & \text{$p=\pm 1$ (mod $8$)}
\\
-l(a) \qquad & \text{$p=\pm 3$ (mod $8$)}
\end{cases}
\ee
So
\be
\sum_{x\in \mathbb{F}_d} \tau^{\tr(ax^2)} =
\begin{cases}
 l(a) \sqrt{d}  \qquad & \text{$p=1$ (mod $8$)}
\\
- i^n l(a) \sqrt{d} \qquad & \text{$p=3$ (mod $8$)}
\\
-l(a) \sqrt{d}   \qquad & \text{$p=5$ (mod $8$)}
\\ 
i^n l(a) \sqrt{d}  \qquad & \text{$p=7$ (mod $8$)}
\end{cases}
\ee
Comparing with the definition of $\tilde{l}(a)$ we again find
\be
\sum_{x\in \mathbb{F}_d} \tau^{\tr(ax^2)} =\sqrt{d}
\tilde{l}(a)
\ee
irrespective of the value of $p$.  

Finally, if $b\neq 0$, we have
\be
\sum_{x\in \mathbb{F}_d} \tau^{\tr(ax^2+bx)} = 
\sum_{x\in \mathbb{F}_d} \tau^{\tr\left(a\left(x+\frac{b}{2a})^2\right)-\frac{b^2}{4 a}\right)}
=
\sqrt{d}\tilde{l}(a) \tau^{-\tr\left(\frac{b^2}{4 a}\right)}
\ee
\end{proof}
We are now ready to establish the main result of this section.  Let 
\be
F = \bmt \alpha & \beta \\ \gamma & \delta \emt
\ee
be any matrix $\in \ESL(2,\mathbb{F}_d)$.  If $\det F = 1$ define
\be
U_F 
=
\begin{cases}
l(\alpha) \sum_{x\in\mathbb{F}_d} \tau^{\tr(\alpha \gamma x^2)} |\alpha x\rangle \langle x | \qquad & \text{if $\beta = 0$}
\\ \vphantom{\Biggl(}
\frac{\tilde{l}(-\beta)}{\sqrt{d}} \sum_{x,y\in \mathbb{F}_d} \tau^{\tr\left( \beta^{-1} (\alpha y^2 - 2 x y + \delta x^2)\right)} |x\rangle \langle y|
\qquad & \text{if $\beta \neq 0$}
\end{cases}
\label{eq:symRepDef1}
\ee
If, on the other hand, $\det F = -1$  define
\be
U_{F} = U_{FJ} U_J
\label{eq:symRepDef2}
\ee
where $U_{FJ}$ is the unitary  defined by Eq.~(\ref{eq:symRepDef1}), $J$ is the anti-symplectic defined by Eq.~(\ref{eq:JDefNew}), and $U_J$ is the anti-unitary defined by Eq.~(\ref{eq:UJDefNew}) (note that $\det (FJ) = 1$, so this definition makes sense).  

We  have
\begin{theorem}
For all $F\in \ESL(2,\mathbb{F}_d)$ the operator $U_F$ is unitary if $\det F= 1$ and anti-unitary if $\det F = -1$.  We have
\begin{enumerate}
\item For all $F\in \ESL(2,\mathbb{F}_d)$ and $\mathbf{u} \in \mathbb{F}^2_d$
\be
U^{\vphantom{\dagger}}_F D^{\vphantom{\dagger}}_{\mathbf{u}} U^{\dagger}_F = D^{\vphantom{\dagger}}_{F\mathbf{u}}
\label{eq:symTrans}
\ee
\item For all $F, F'\in \ESL(2, \mathbb{F}_d)$
\begin{align}
U_{F} U_{F'} & = U_{FF'}
\label{eq:frep1}
\\
U^{\vphantom{\dagger}}_{F^{-1}} &= U^{\dagger}_F
\label{eq:frep2}
\end{align}
\end{enumerate}
\end{theorem}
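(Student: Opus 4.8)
The plan is to reduce everything to the symplectic case $\det F=1$ and, within that case, to three routine ingredients: the Gaussian sum formula (Lemma~\ref{lem:gauss}), orthogonality of the additive characters of $\mathbb{F}_d$, and Schur's lemma applied to the irreducible representation of $\mathcal{W}_d$. First I would record that $U_I=1$ (read off Eq.~(\ref{eq:symRepDef1}) with $\alpha=\delta=1$, $\beta=\gamma=0$) and that each $U_F$ with $\det F=1$ is unitary: for $\beta=0$ it is a signed permutation matrix multiplied by phases of unit modulus, while for $\beta\neq0$ the identity $(U_FU_F^\dagger)_{x,x'}=\delta_{x,x'}$ is a one-line computation once the inner sum over the Hilbert-space index is recognised as $\sum_{z}\tau^{\tr(cz)}$, which vanishes for $c\neq0$ and equals $d$ for $c=0$ (here $\lvert\tilde l(-\beta)\rvert=1$ is all that is needed about the prefactor). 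Next I would verify the intertwining relation Eq.~(\ref{eq:symTrans}) for $\det F=1$: it suffices to check it on the operators $X_u$ and $Z_u$ (since $D_{\mathbf u}=\tau^{\tr(u_1u_2)}X_{u_1}Z_{u_2}$, the general case then follows from Eq.~(\ref{eq:DopProdForm}) together with $\langle F\mathbf u,F\mathbf v\rangle=\langle\mathbf u,\mathbf v\rangle$ for $\det F=1$), and for each of the two branches of Eq.~(\ref{eq:symRepDef1}) this is a direct manipulation of the explicit matrices using $\langle x\rvert X_u=\langle x-u\rvert$, $Z_u\lvert x\rangle=\omega^{\tr(xu)}\lvert x\rangle$, and $\tau^2=\omega$.

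The heart of the matter is the homomorphism property Eq.~(\ref{eq:frep1}) for symplectic $F,F'$. Given the intertwining relation, Schur's lemma (exactly as in the argument at the start of this section) already tells us that $U_FU_{F'}=c(F,F')\,U_{FF'}$ for a scalar $c(F,F')$ of unit modulus, so only the value of $c$ is at stake. I would pin it down by comparing a single matrix element that is provably nonzero on the right-hand side: inspecting Eq.~(\ref{eq:symRepDef1}) shows $\langle0\rvert U_{FF'}\lvert0\rangle\neq0$ whatever the entries of $FF'$ (it equals $l(\alpha)$ when the $(1,2)$-entry of $FF'$ vanishes and $\tilde l(-\beta)/\sqrt d$ otherwise), so $c(F,F')=\langle0\rvert U_FU_{F'}\lvert0\rangle/\langle0\rvert U_{FF'}\lvert0\rangle$, and it remains to show the numerator equals the denominator. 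Writing $\langle0\rvert U_FU_{F'}\lvert0\rangle=\sum_z\langle0\rvert U_F\lvert z\rangle\langle z\rvert U_{F'}\lvert0\rangle$ and substituting the explicit formulas reduces this, in the generic case, to a single Gaussian sum $\sum_z\tau^{\tr((\beta_F^{-1}\alpha_F+\beta_{F'}^{-1}\delta_{F'})z^2)}$, which Lemma~\ref{lem:gauss} evaluates; the resulting constant is then matched against $\tilde l\bigl(-(\alpha_F\beta_{F'}+\beta_F\delta_{F'})\bigr)$ using multiplicativity of $l$, the relation $\beta_F^{-1}\alpha_F+\beta_{F'}^{-1}\delta_{F'}=(\beta_F\beta_{F'})^{-1}(\alpha_F\beta_{F'}+\beta_F\delta_{F'})$, and Lemma~\ref{lem:lAndlp}. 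One must also dispose of the degenerate sub-cases in which $\beta_F=0$, or $\beta_{F'}=0$, or the quadratic coefficient above vanishes, or the $(1,2)$-entry of $FF'$ vanishes; in each the inner sum collapses (to $d$, or to a single term) and the constants are checked the same way. This bookkeeping of the $\tilde l$- and $\tau$-prefactors across the case split is, I expect, the \textbf{main obstacle}; everything else is mechanical.

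For the anti-symplectic case ($\det F=-1$) I would first note that $\det(FJ)=1$, so $U_{FJ}$ is unitary by the symplectic case and $U_F=U_{FJ}U_J$ is anti-unitary. The intertwining relation then follows from the one for $U_{FJ}$, the identity $U_JD_{\mathbf u}U_J^\dagger=D_{J\mathbf u}$ noted in Section~\ref{sec:prelim}, and $J^2=I$: $U_FD_{\mathbf u}U_F^\dagger=U_{FJ}D_{J\mathbf u}U_{FJ}^\dagger=D_{FJ\cdot J\mathbf u}=D_{F\mathbf u}$. For the homomorphism property with one or both of $F,F'$ anti-symplectic, the one extra fact needed is the conjugation identity $U_JU_GU_J=U_{JGJ}$ for symplectic $G$; this I would read off Eq.~(\ref{eq:symRepDef1}) by observing that conjugation by $U_J$ complex-conjugates every matrix element, hence sends $\tau\mapsto\tau^{-1}$ and $\tilde l(-\beta)\mapsto\tilde l(\beta)$ by Lemma~\ref{lem:tilLConj}, and then comparing with the entries of $U_{JGJ}$, where $JGJ$ has entries $\alpha,-\beta,-\gamma,\delta$ in the same positions. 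Granting this, together with $U_J^2=I$, all three mixed cases collapse to the symplectic case: for $\det F=\det F'=-1$ one has $U_FU_{F'}=U_{FJ}U_J\,U_{F'J}U_J=U_{FJ}\,U_{J(F'J)J}=U_{FJ}U_{JF'}=U_{(FJ)(JF')}=U_{FF'}$, and the cases $\det F=1,\det F'=-1$ and $\det F=-1,\det F'=1$ are shorter still, using the definition Eq.~(\ref{eq:symRepDef2}).

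Finally, Eq.~(\ref{eq:frep2}) is immediate: from Eq.~(\ref{eq:frep1}) and $U_I=1$ we get $U_FU_{F^{-1}}=U_{FF^{-1}}=U_I=1$, and since $U_F$ is unitary (resp.\ anti-unitary) its inverse is $U_F^\dagger$, whence $U_{F^{-1}}=U_F^\dagger$. (As a by-product, injectivity of $F\mapsto U_F$ — i.e.\ faithfulness — drops out of Eq.~(\ref{eq:symTrans}) and the linear independence of the $D_{\mathbf u}$.)
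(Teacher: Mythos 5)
Your proposal is correct, and on the central step it takes a genuinely different route from the paper. The paper proves the homomorphism property $U_{F}U_{F'}=U_{FF'}$ by brute force: it multiplies out the two explicit matrices in four cases (according to whether $\beta_{F}$ and $\beta_{F'}$ vanish), evaluates the resulting double sum with Lemma~\ref{lem:gauss}, and matches the answer against $U_{FF'}$ entry by entry, with a further sub-split in the hardest case according to whether the $(1,2)$-entry of $FF'$ vanishes. You instead invoke Schur's lemma (exactly as the paper itself does at the start of Section~\ref{sec:Faithful}) to know in advance that $U_FU_{F'}=c\,U_{FF'}$ for a unimodular scalar, and then determine $c$ from the single matrix element $\langle 0\rvert\cdot\lvert 0\rangle$ --- a legitimate choice, since your observation that $\langle 0\rvert U_G\lvert 0\rangle$ equals $l(\alpha)$ or $\tilde l(-\beta)/\sqrt d$ and hence never vanishes for symplectic $G$ is correct. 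This localizes the entire computation to one Gaussian sum $\sum_z\tau^{\tr((\beta_F^{-1}\alpha_F+\beta_{F'}^{-1}\delta_{F'})z^2)}$ and the same $\tilde l$-bookkeeping the paper needs (via Lemma~\ref{lem:tilLConj} and multiplicativity of $l$), with the same case structure but much less algebra per case; the price is nothing, since the intertwining relation needed to apply Schur is proved anyway. Your other deviations are minor and equally sound: you check Eq.~(\ref{eq:symTrans}) on the generators $X_u$, $Z_u$ and recover the general case from Eq.~(\ref{eq:DopProdForm}) and the invariance of the symplectic form (the phase bookkeeping there does close, since $\langle(u_1,0),(0,u_2)\rangle=-\tr(u_1u_2)$ cancels the $\tau^{\tr(u_1u_2)}$ prefactor), where the paper conjugates a general $D_{\mathbf u}$ directly; and for the anti-symplectic reduction you use the conjugation identity $U_JU_GU_J=U_{JGJ}$ (correctly justified via complex conjugation of matrix elements and Lemma~\ref{lem:tilLConj}) together with $U_J^2=1$, where the paper verifies the equivalent pair $U_FU_J=U_{FJ}$, $U_JU_F=U_{JF}$. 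Your derivation of Eq.~(\ref{eq:frep2}) from $U_I=1$ matches the paper's.
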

\begin{proof}
Suppose, first of all, that $\det F = 1$ and $\beta = 0$.  Then 
\begin{align}
U^{\vphantom{\dagger}}_F D_{\mathbf{u}} U^{\dagger}_F 
& = 
\sum_{x,y\in\mathbb{F}_d}\tau^{\tr(\alpha\gamma(x^2 - y^2))} \langle x | D_{\mathbf{u}}| y \rangle |\alpha x\rangle\langle \alpha y|
\\
& =\sum_{x,y\in\mathbb{F}_d}\tau^{\tr(\alpha^{-1}\gamma(x^2 - y^2))} \tau^{\tr(u_1 u_2 + 2 \alpha^{-1} u_2 y)} \delta_{x,y+\alpha u_1} | x \rangle \langle y |
\\
& =\sum_{x,y\in\mathbb{F}_d}\tau^{\tr(\alpha u_1(\gamma u_1 +\delta u_2))} \omega^{\tr((\gamma u_1 + \delta u_2) y)}|y+\alpha u_1\rangle \langle y|
\\
& = D_{F \mathbf{u}}
\end{align}
which establishes Eq.~(\ref{eq:symTrans}) for this case.  Setting $\mathbf{u} = \boldsymbol{0}$ we also see that $U_F$ is unitary.

Suppose, next, that $\det F = 1$ and $\beta \neq 0$.  Then
\begin{align}
U^{\vphantom{\dagger}}_F D_{\mathbf{u}} U^{\dagger}_F 
& = 
\frac{1}{d}\sum_{\substack{x_1,x_2,\\ y_1,y_2\in \mathbb{F}_d}}
\tau^{\tr(\beta^{-1} (\alpha (y_1^2 -y_2^2) - 2 (x_1 y_1 - x_2 y_2 ) + \delta(x_1^2 - x_2^2)))} \langle y_1 | D_{\mathbf{u}} | y_2\rangle |x_1\rangle \langle x_2|
\nonumber
\\
& = \frac{1}{d}\sum_{\substack{x_1,x_2,\\ y_1,y_2\in \mathbb{F}_d}} \tau^{\tr(\beta^{-1} (\alpha (y_1^2 -y_2^2) - 2 (x_1 y_1 - x_2 y_2 ) 
+ \delta(x_1^2 - x_2^2))+u_1 u_2 +2 u_2 y_2)}
\nonumber
\\
& \hspace{2.5 in} \times \delta_{y_1,y_2+u_1} |x_1\rangle \langle x_2|
\nonumber
\\
& = \sum_{x_1,x_2\in\mathbb{F}_d}
\left(\frac{1}{d}\sum_{y_2\in \mathbb{F}_d} \tau^{\tr(-2\beta^{-1}(x_1 -x_2-\alpha u_1-\beta u_2)y_2)}
\right) 
\nonumber
\\
&  \hspace{1 in} \times\tau^{\tr(\beta^{-1}(\delta(x_1^2-x_2^2)-2 u_1 x_1 +\alpha u_1^2 +\beta u_1 u_2))} |x_1\rangle \langle x_2|
\nonumber
\\
&  =
\sum_{x_1,x_2\in \mathbb{F}_d} \tau^{\tr(\beta^{-1}(\delta(x_1^2-x_2^2)-2 u_1 x_1 +\alpha u_1^2 +\beta u_1 u_2))} \delta_{x_1,x_2+\alpha u_1 +\beta u_2} |x_1\rangle \langle x_2|
\nonumber
\\
& = \sum_{x\in \mathbb{F}_d} \tau^{\tr((\alpha u_1+\beta u_2)(\gamma u_1 + \delta u_2))}\omega^{\tr((\gamma u_1 + \delta u_2) x)}|x+\alpha u_1 + \beta u_2\rangle \langle x |
\nonumber
\\
& = D_{F\mathbf{u}}
\end{align}
which establishes Eq.~(\ref{eq:symTrans}) and the fact that $U_F$ is unitary for this case also.

It remains to consider the case when $\det F = -1$.  We have
\be
U^{\vphantom{\dagger}}_J D^{\vphantom{\dagger}}_{\mathbf{u}} U_J^{\dagger} 
= D^{\vphantom{\dagger}}_{J\mathbf{u}}
\ee
In view of the foregoing it follows that
\be
U^{\vphantom{\dagger}}_F D^{\vphantom{\dagger}}_{\mathbf{u}} U^{\dagger}_F 
= 
U^{\vphantom{\dagger}}_{FJ} U^{\vphantom{\dagger}}_J D^{\vphantom{\dagger}}_{\mathbf{u}} U_J^{\dagger} U^{\dagger}_{FJ}
=
D^{\vphantom{\dagger}}_{F\mathbf{u}}
\ee

We now turn to the proof that the representation is faithful.  Let
\be
F_1  = \bmt \alpha_1 & \beta_1 \\ \gamma_1 & \delta_1 \emt
\qquad \qquad
F_2  = \bmt \alpha_2 & \beta_2 \\ \gamma_2 & \delta_2 \emt
\ee
be two matrices $\in \ESL(2,\mathbb{F}_d)$, and let
\be
F  =\bmt \alpha & \beta \\ \gamma & \delta \emt
\ee
be the product $F_1 F_2$.

To start with suppose that $\det F_1 = \det F_2 = 1$. 
There are four cases to consider
\subsection*{Case 1}  $\beta_1= \beta_2 =0$.  Then
\begin{align}
U_{F_1} U_{F_2} 
& = 
l(\alpha_1\alpha_2) \sum_{x_1,x_2\in \mathbb{F}_d} \tau^{\tr(\alpha_1 \gamma_1 x_1^2 + \alpha_2 \gamma_2x^2_2)}\delta_{x_1,\alpha_2 x_2} |\alpha_1 x_1\rangle \langle x_2 |
\nonumber
\\
& = l(\alpha) \sum_{x\in \mathbb{F}_d}\tau^{\tr(\alpha \gamma x^2)} |\alpha x\rangle \langle x |
\nonumber
\\
& = U_F
\end{align}
\subsection*{Case 2}  $\beta_1 = 0$, $\beta_2 \neq 0$.  Then
\begin{align}
U_{F_1} U_{F_2} 
& = 
\frac{l(\alpha_1) \tilde{l}(-\beta_2)}{\sqrt{d}}
\sum_{\substack{x_1, x_2,\\y_2\in \mathbb{F}_d}}
\tau^{\tr(\alpha_1 \gamma_1 x_1^2 +\beta_2^{-1} (\alpha_2 y_2^2 -2 x_2 y_2 + \delta_2 x_2^2))}\delta_{x_1,x_2}|\alpha_1 x_1\rangle \langle y_2 |
\nonumber
\\
& =
\frac{\tilde{l}(-\alpha_1\beta_2)}{\sqrt{d}}
\sum_{x,y\in \mathbb{F}_d} \tau^{\tr\bigl(\alpha^{-1}_1\beta^{-1}_2\bigl((\gamma_1 \beta_2+\delta_1 \delta_2)x^2 -2  x y+\alpha_1 \alpha_2 y^2 \bigr)\bigr)} |x\rangle \langle y|
\nonumber
\\
&= \frac{\tilde{l}(-\beta)}{\sqrt{d}}
\sum_{x,y\in\mathbb{F}_d}\tau^{\tr(\beta^{-1}(\delta x^2 - 2 x y + \alpha y^2)}|x\rangle \langle y |
\nonumber
\\
& = U_F
\end{align}
\subsection*{Case 3} $\beta_1 \neq 0$, $\beta_2 = 0$.  The proof is similar to that of Case 2.
\subsection*{Case 4} $\beta_1 \neq 0$, $\beta_2 \neq 0$.Then
\begin{align}
U_{F_1} U_{F_2} 
& = 
\frac{\tilde{l}(-\beta_1)\tilde{l}(-\beta_2)}{d}
\sum_{\substack{x_1,x_2,\\y_1,y_2 \in \mathbb{F}_d}}
\tau^{\tr\bigl(\beta^{-1}_1(\alpha_1 y_1^2 -2 x_1 y_1 + \delta_1 x_1^2)+\beta^{-1}_2(\alpha_2 y_2^2 -2 x_2 y_2 + \delta_2 x_2^2) \bigr)}
\nonumber
\\
& \hspace{2.5 in} \times \delta_{y_1,x_2} |x_1\rangle \langle y_2|
\nonumber
\\
& = \frac{\tilde{l}(-\beta_1)\tilde{l}(-\beta_2)}{d}
\sum_{x,y\in \mathbb{F}_d} 
\left(\sum_{z\in \mathbb{F}_d} \tau^{\tr\bigl(\beta^{-1}_1\beta^{-1}_2 \beta z^2 - 2 (\beta^{-1}_1 x + \beta^{-1}_2 y)z\bigr)}
\right) 
\nonumber
\\
& \hspace{2 in} \times\tau^{\tr(\beta^{-1}_1\delta_1 x^2 + \beta^{-1}_2 \alpha_2 y^2)} |x\rangle \langle y |
\label{eq:F1F2ProdInterB}
\end{align}
It follows from Lemma~\ref{lem:gauss} that
\begin{align}
&\sum_{z\in \mathbb{F}_d} \tau^{\tr\bigl(\beta^{-1}_1\beta^{-1}_2 \beta z^2 - 2 (\beta^{-1}_1 x + \beta^{-1}_2 y)z\bigr)}
\nonumber
\\
&\hspace{1 in}=
\begin{cases}
d \delta_{x,-\beta^{\vphantom{-1}}_1\beta^{-1}_2 y} \qquad & \beta=0
\\
\sqrt{d} \tilde{l}(\beta^{-1}_1\beta^{-1}_2 \beta) \tau^{-\tr\left(\beta^{-1}\beta_1\beta_2 (\beta^{-1}_1 x+\beta^{-1}_2 y)^2
\right)} \qquad & \beta\neq 0
\end{cases}
\end{align}
 So if $\beta=0$ 
\begin{align}
U_{F_1}U_{F_2} & = 
\tilde{l}(-\beta_1) \tilde{l}(-\beta_2) \sum_{y\in \mathbb{F}_d} \tau^{\tr\bigl(\beta^{\vphantom{-1}}_1\beta^{-1}_2(\beta^{-1}_2\delta^{\vphantom{-1}}_1+\beta^{-1}_1 \alpha^{\vphantom{-1}}_2)y^2\bigr)} |-\beta^{\vphantom{-1}}_1\beta^{-1}_2 y\rangle \langle y|
\label{eq:F1F2ProdInterA}
\end{align}
To evaluate this expression observe that if $\beta=0$
\be
\bmt \alpha_1 & \beta_1 \\ \gamma_1 & \delta_1\emt
=
\bmt \alpha & 0 \\ \gamma & \alpha^{-1} \emt 
\bmt \delta_2 & -\beta_2 \\ -\gamma_2 & \alpha_2 \emt
=
\bmt \alpha\delta_2 & -\alpha \beta_2 \\ \gamma \delta_2-\alpha^{-1} \gamma_2
& -\beta_2 \gamma +\alpha^{-1} \alpha_2
\emt
\ee
from which it is easily seen that
\begin{align}
\alpha & = -\beta^{\vphantom{-1}}_1\beta^{-1}_2
\\
\gamma & = -\beta^{-1}_2 \delta^{\vphantom{-1}}_1 -\beta^{-1}_1 \alpha^{\vphantom{-1}}_2
\end{align}
Also, in view of Lemma~\ref{lem:tilLConj},
\be
\tilde{l}(-\beta_1) \tilde{l}(-\beta_2) 
=\tilde{l}(-\beta_1) \bigl( \tilde{l}(\beta_2)\bigr)^{*}
=l(-\beta_1 \beta_2) = l(-\beta^{\vphantom{-1}}_1 \beta^{-1}_2) = l(\alpha)
\ee
Substituting these expressions into Eq.~(\ref{eq:F1F2ProdInterA}) we deduce
\be
U_{F_1}U_{F_2} = l(\alpha) \sum_{y\in\mathbb{F}_d} \tau^{\tr(\alpha\gamma y^2)}|\alpha y\rangle \langle y| = U_F
\ee
Suppose, on the other hand, that $\beta=0$.  Then Eq.~(\ref{eq:F1F2ProdInterB}) becomes
\begin{align}
U_{F_1}U_{F_2} & = 
\frac{1}{\sqrt{d}}\tilde{l}(-\beta^{\vphantom{-1}}_1)\tilde{l}(-\beta^{\vphantom{-1}}_2) \tilde{l}(\beta^{-1}_1 \beta^{-1}_2\beta)
\nonumber
\\
& \hspace{0.5 in} \times
\sum_{x,y\in\mathbb{F}_d} \tau^{\tr\bigl(\beta^{-1}\bigl(\beta^{-1}_2(\beta\alpha_2 - \beta_1)y^2-2 x y + \beta^{-1}_1 (\beta \delta_1 -\beta_2)x^2\bigr)\bigr)}|x\rangle \langle y|
\label{eq:F1F2ProdInterC}
\end{align}
To evaluate this expression observe that 
\begin{align}
\beta \alpha_2 - \beta_1 &= \beta_2 \alpha
\\
\beta \delta_1 - \beta_2 & = \beta_1 \delta
\end{align}
Also
\begin{align}
\tilde{l}(-\beta^{\vphantom{-1}}_1)\tilde{l}(-\beta^{\vphantom{-1}}_2) \tilde{l}(\beta^{-1}_1 \beta^{-1}_2\beta)
&=
\tilde{l}(-\beta^{\vphantom{-1}}_1)\tilde{l}(-\beta^{\vphantom{-1}}_2)
\bigl(\tilde{l}(-\beta^{-1}_1 \beta^{-1}_2\beta)
\bigr)^{*}
\nonumber
\\
& = - i ^{-\frac{n(p+3)}{2}}l(-\beta_1)l(-\beta_2)l(-\beta^{-1}_1\beta^{-1}_2\beta)
\nonumber
\\
& = - i ^{-\frac{n(p+3)}{2}} l(-\beta)
\nonumber
\\
& = \tilde{l}(-\beta)
\end{align}
Substituting these expressions into Eq.~(\ref{eq:F1F2ProdInterC}) we deduce
\be
U_{F_1}U_{F_2}
=\frac{\tilde{l}(-\beta)}{\sqrt{d}} \sum_{x,y\in\mathbb{F}_d}
\tau^{\tr(\beta^{-1}(\alpha y^2 - 2 x y + \delta x^2))}|x\rangle \langle y| = U_F
\ee

We now turn to the case when either or both of $F_1$, $F_2$ is an anti-symplectic matrix.  
 It is readily verified that  for all $F\in \ESL(2,\mathbb{F}_d)$
\begin{align}
U_F U_J & = U_{FJ}
\\
U_J U_F & =  U_{JF}
\label{eq:faithfulInterA}
\end{align}
Now let $F_1$ be symplectic and $F_2$ anti-symplectic. Then in view of the foregoing
\be
U_{F_1} U_{F_2}  = U_{F_1} U_{F_2J} U_J = U_{F_1 F_2 J} U_J = U_{F_1F_2}
\ee
It then follows from this and Eq.~(\ref{eq:faithfulInterA}) that if $F_1$ is anti-symplectic 
\be
U_{F_1} U_{F_2} = U_{F_1 J} U_J U_{F_2} = U_{F_1 J} U_{J F_2}  = U_{F_1 F_2}
\ee
for all $F_2$, symplectic or anti-symplectic.
Finally, we note that Eq.~(\ref{eq:frep2}) is an immediate consequence of Eq.~(\ref{eq:frep1}) and the fact that the $U_F$ are all unitary/ anti-unitary.  
\end{proof}
 
\section{Expressing the $U_F$ in Terms of the $D_{\mathbf{u}}$}
\label{sec:UFtermsDu}
In the last section we gave explicit expressions for the matrix elements of the operators $U_F$ in the standard basis.  However, it is sometimes useful to express them instead as linear combinations of the $D_{\mathbf{u}}$.  The fact that this is possible is an immediate consequence of the fact that
\be
\Tr\left( D^{\dagger}_{\mathbf{u}} D^{\vphantom{\dagger}}_{\mathbf{v}} 
\right) = d \delta_{\mathbf{u},\mathbf{v}}
\ee
which shows that, relative to the Hilbert-Schmidt inner-product $\langle A, B\rangle = \Tr(A^{\dagger} B)$,  the operators $\frac{1}{\sqrt{d}}D_{\mathbf{u}}$ are an orthonormal basis for the space of linear operators on $d$-dimensional Hilbert space (note that we use the symbol $\Tr$ for the ordinary matrix trace, and $\tr$ for the field-theoretic trace).  The following theorem gives the relevant formulae.

\begin{theorem} 
\label{thm:UtermsD}
Let 
\be
F = \bmt \alpha & \beta \\ \gamma & \delta \emt
\ee
be any matrix $\in \SL(2,\mathbb{F}_d)$ (so $\det F = 1$).  Let $t = \Tr(F)$. Then
\begin{enumerate}
\item[(a)] If $t\neq 2$
\be
U_F =
\begin{cases}
\frac{l(t-2)}{d}\sum_{\mathbf{u} \in \mathbb{F}^2_d} \tau^{\langle \mathbf{u},\tilde{F} \mathbf{u} \rangle } D^{\vphantom{\dagger}}_{\mathbf{u}}\qquad & \beta \neq 0
\\
\frac{l(\alpha)}{d}\sum_{\mathbf{u} \in \mathbb{F}^2_d} \tau^{\langle \mathbf{u},\tilde{F} \mathbf{u} \rangle } D^{\vphantom{\dagger}}_{\mathbf{u}}\qquad & \beta = 0
\end{cases}
\ee
where
\be
\tilde{F}=\frac{1}{2-t} F
\ee
\item[(b)]
If $t=2$
\be
U^{\vphantom{\dagger}}_F =
\begin{cases}
\frac{\tilde{l}(-\beta)}{\sqrt{d}} \sum_{r\in\mathbb{F}_d} \tau^{\tr(\beta r^2)} D^{\vphantom{\dagger}}_{\beta r,(1-\alpha)r}
\qquad & \beta \neq 0
\\
\frac{\tilde{l}(\gamma)}{\sqrt{d}} \sum_{r\in\mathbb{F}_d} \tau^{-\tr(\gamma r^2)} D^{\vphantom{\dagger}}_{0,\gamma r}
\qquad & \text{$\beta =0$ and $\gamma \neq 0$}
\\
D^{\vphantom{\dagger}}_{\boldsymbol{0}} \qquad & \beta = \gamma = 0
\end{cases}
\ee
\end{enumerate}
We also have
\begin{enumerate}
\item[(c)]  If $t\neq 2$
\be
\Tr\left(U_F\right) =
\begin{cases}
l(t-2) \qquad & \beta \neq 0
\\
l(\alpha) \qquad & \beta = 0
\end{cases}
\ee
\item[(d)] If $t=2$
\be
\Tr(U_F) = \begin{cases}
\tilde{l}(-\beta) \sqrt{d}  \qquad & \beta \neq 0
\\
\tilde{l}(\gamma) \sqrt{d} \qquad & \text{$\beta=0$ and $\gamma \neq 0$}
\\
d \qquad & \beta= \gamma = 0
\end{cases}
\ee
\end{enumerate}
\end{theorem}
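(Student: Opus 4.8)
The plan is to compute the Hilbert--Schmidt coefficients of $U_F$ directly from the explicit matrix elements in Eq.~(\ref{eq:symRepDef1}). Since the operators $\tfrac{1}{\sqrt{d}}D_{\mathbf{u}}$ form an orthonormal basis for the operator space,
\be
U_F = \frac{1}{d}\sum_{\mathbf{u}\in\mathbb{F}^2_d}\Tr\bigl(D^{\dagger}_{\mathbf{u}}U_F\bigr)D_{\mathbf{u}} = \frac{1}{d}\sum_{\mathbf{u}\in\mathbb{F}^2_d}\Tr\bigl(D_{-\mathbf{u}}U_F\bigr)D_{\mathbf{u}},
\ee
so everything reduces to evaluating $\Tr(D_{-\mathbf{u}}U_F)$. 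A direct computation gives $D_{-\mathbf{u}}|y\rangle=\tau^{\tr(u_1u_2-2yu_2)}|y-u_1\rangle$, and substituting the matrix elements of $U_F$ collapses the trace to a single sum over $\mathbb{F}_d$: for $\beta\neq 0$ it is $\tfrac{\tilde{l}(-\beta)}{\sqrt{d}}\sum_z\tau^{Q(z)}$ with $Q$ a quadratic polynomial whose leading coefficient, after using $\det F=1$, is exactly $\beta^{-1}(\alpha+\delta-2)=\beta^{-1}(t-2)$; the case $\beta=0$ is treated identically from the first line of Eq.~(\ref{eq:symRepDef1}), the leading coefficient then being proportional to $\alpha-1$.

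When $t\neq 2$ this is a genuine Gaussian sum, and Lemma~\ref{lem:gauss} evaluates it: the $\sqrt{d}$ it produces cancels the $1/\sqrt{d}$, leaving a prefactor built from $\tilde{l}$'s and a phase coming from completing the square. Two further steps finish part~(a). First, the $\tilde{l}$-prefactor collapses to $l(t-2)$ (resp.\ $l(\alpha)$) using multiplicativity of $l$, Lemma~\ref{lem:lAndlp}, and the arithmetic of $i^{-n(p+3)/2}$ — the same bookkeeping that appears in the proof of Lemma~\ref{lem:tilLConj}. Second, the completed-square exponent is identified with $\langle\mathbf{u},\tilde{F}\mathbf{u}\rangle$ for $\tilde{F}=(2-t)^{-1}F$; this is a short computation using the factorization $2-t=-\alpha^{-1}(\alpha-1)^2$ in the case $\beta=0$ and its analogue when $\beta\neq 0$. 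Part~(c) is then immediate on taking the trace of~(a) and using $\Tr D_{\mathbf{u}}=d\,\delta_{\mathbf{u},\boldsymbol{0}}$ (equivalently, read off $\sum_x\langle x|U_F|x\rangle$ from Eq.~(\ref{eq:symRepDef1}) and apply Lemma~\ref{lem:gauss} once).

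When $t=2$ the polynomial $Q$ degenerates: its leading coefficient vanishes, so $\sum_z\tau^{Q(z)}$ equals $d$ times the Kronecker delta enforcing that the \emph{linear} coefficient of $Q$ vanish. That constraint pins $\mathbf{u}$ to the one-dimensional set $\{(\beta r,(1-\alpha)r):r\in\mathbb{F}_d\}$ when $\beta\neq 0$ (and to $\{(0,\gamma r)\}$ if $\beta=0$, $\gamma\neq 0$, while $F=I$ if $\beta=\gamma=0$), and on that set the surviving $z$-independent phase simplifies, using $t=2$, to $\tau^{\tr(\beta r^2)}$ (resp.\ $\tau^{-\tr(\gamma r^2)}$). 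This yields the single-sum formulas of part~(b), and part~(d) follows again by taking the trace.

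The main obstacle is not the structure of the argument but the phase bookkeeping in part~(a): one must show that the leftover $\tilde{l}$-prefactor really does collapse to the clean characters $l(t-2)$ and $l(\alpha)$, and that the completed-square exponent is precisely the bilinear form attached to $(2-t)^{-1}F$ rather than some other rational function of the entries — getting the scalar $(2-t)^{-1}$ correct is the delicate point. A convenient independent check is to verify instead that $A:=\sum_{\mathbf{u}}\tau^{\langle\mathbf{u},\tilde{F}\mathbf{u}\rangle}D_{\mathbf{u}}$ satisfies $A D_{\mathbf{v}}=D_{F\mathbf{v}}A$ for all $\mathbf{v}$ — a purely algebraic identity in $\mathbf{u},\mathbf{v}$ using only Eq.~(\ref{eq:DopProdForm}) and $\det F=1$ — so that $A\propto U_F$ by irreducibility of $\mathcal{W}_d$ and Schur's lemma, the constant being fixed by $\Tr A=d$ against the value of $\Tr U_F$ computed above.
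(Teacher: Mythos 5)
Your proposal is correct and follows essentially the same route as the paper's own proof: compute $\Tr\bigl(D^{\dagger}_{\mathbf{u}}U_F\bigr)$ directly from the matrix elements in Eq.~(\ref{eq:symRepDef1}), evaluate the resulting quadratic exponential sum with Lemma~\ref{lem:gauss} (with the $t=2$ degeneration producing the Kronecker-delta constraint of part (b)), and then expand $U_F$ in the orthonormal basis $\tfrac{1}{\sqrt{d}}D_{\mathbf{u}}$, reading off (c) and (d) as the $\mathbf{u}=\boldsymbol{0}$ coefficients. The only quibble is cosmetic: in the $\beta=0$ case the relevant degeneracy sits in the Kronecker delta $\delta_{(\alpha-1)x,\,u_1}$ (the Gaussian sum that survives when $\alpha=1$ has leading coefficient $\gamma$), not in a leading coefficient proportional to $\alpha-1$, but this does not affect the validity of your argument.
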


\begin{proof}
Suppose that $\beta=0$.  Then it follows from the definitions of $D_{\mathbf{u}}$, $U^{\vphantom{\dagger}}_F$ that
\begin{align}
\Tr\left(D^{\dagger}_{\mathbf{u}} U^{\vphantom{\dagger}}_F\right) & = 
\sum_{x,y\in\mathbb{F}_d} \langle x |D^{\dagger}_{\mathbf{u}} | y \rangle \langle y | U_F | x \rangle 
\nonumber
\\
& =
l(\alpha) \sum_{x,y\in\mathbb{F}_d} \tau^{\tr\bigl( \alpha \gamma x^2 - 2 u_2 y + u_1 u_2\bigr)} \delta_{x,y-u_1} \delta_{y,\alpha x}
\nonumber
\\
& =
\begin{cases}
l(\alpha) \tau^{\langle \mathbf{u},\tilde{F} \mathbf{u} \rangle}  \qquad & t \neq 2
\\
\sqrt{d} \tilde{l}(\gamma) \tau^{-\tr(\gamma^{-1} u_2^2)}\delta_{u_1,0}  \qquad & \text{$t=2$ and $\gamma \neq 0$}
\\
d \delta_{\mathbf{u},\boldsymbol{0}} \qquad & \text{$t=2$ and $\gamma = 0$}
\end{cases}
\label{eq:UtermsDInterA}
\end{align}
where we used Lemma~\ref{lem:gauss} in the last line.  Suppose, on the other hand, that $\beta\neq 0$.  Then
\begin{align}
\Tr\left(D^{\dagger}_{\mathbf{u}} U^{\vphantom{\dagger}}_F\right) & = 
\sum_{x,y\in\mathbb{F}_d} \langle x |D^{\dagger}_{\mathbf{u}} | y \rangle \langle y | U^{\vphantom{\dagger}}_F | x \rangle 
\nonumber
\\
&=\frac{1}{\sqrt{d}} \tilde{l}(-\beta)\sum_{x,y\in\mathbb{F}_d} \tau^{\tr\bigl(\beta^{-1}(\alpha x^2 - 2 x y+\delta y^2)  - 2 u_2 y+u_1 u_2\bigr)}\delta_{x,y-u_1}
\nonumber
\\
& = \frac{1}{\sqrt{d}} \tilde{l}(-\beta) \sum_{x\in\mathbb{F}_d} \tau^{\tr\bigl(\beta^{-1}(t-2) x^2 + 2(\beta^{-1}(\delta-1) u_1 - u_2) x + \beta^{-1}\delta u_1^2 - u_1 u_2\bigr)}
\nonumber
\\
&=
\begin{cases}
 l(t-2) \tau^{\langle \mathbf{u},\tilde{F} \mathbf{u}\rangle} \qquad & t\neq 2
\\
\sqrt{d} \tilde{l}(-\beta) \tau^{\tr(\beta^{-1} u_1^2)} \delta_{u_2,\beta^{-1}(1-\alpha) u_1} \qquad & t = 2
\end{cases}
\label{eq:UtermsDInterB}
\end{align}
where we again used Lemma~\ref{lem:gauss}.  Statements (c) and (d) of the theorem are special cases of these formulae (since $D_{\boldsymbol{0}}=1$).  Statements (a) and (b) are immediate consequences of them and the fact
\be
U^{\vphantom{\dagger}}_F  = \frac{1}{d} \sum_{\mathbf{u} \in \mathbb{F}^2_d} \Tr\left(D^{\dagger}_{\mathbf{u}} U^{\vphantom{\dagger}}_F\right) D^{\vphantom{\dagger}}_{\mathbf{u}}
\ee
\end{proof}

\section{Eigenvalues and Orders of the (anti-)Symplectic Matrices}
\label{sec:orders}
We now turn to the problem of calculating the eigenvalues and order of an arbitrary symplectic/anti-symplectic matrix
\be
F = \bmt \alpha & \beta \\ \gamma & \delta \emt
\ee
Once these are known it is straightforward to find the order, the eigenvalues and eigenspaces and the roots of the corresponding unitary/anti-unitary $U_F$.

 Let $t=\alpha+\delta$ be the trace (in the ordinary matrix sense) of $F$, and let $\Delta$ be its determinant.  We say that $F$ is 
\begin{enumerate}
\item type $1$ if $t^2 - 4 \Delta \in Q$
\item type $2$ if $t^2 - 4 \Delta \in N$
\item type $3$ if $t^2 - 4 \Delta = 0$.
\end{enumerate}
Suppose, to begin with, that $F$ is type $1$.  Then $F$ is conjugate to  the matrix~\cite{Flammia,selfC,Hump,Gehles}
\be
\bar{F}= \bmt 0 & -\Delta \\ 1 & t \emt
\ee
This matrix is diagonalizable.  In fact
\be
\det (F-\lambda I) = \lambda^2 - t \lambda + \Delta
\ee
So the equation $\det(F-\lambda I)=0$ has the two solutions
\be
\lambda_{\pm} = \frac{t\pm \sqrt{t^2 - 4 \Delta}}{2}
\ee
where $\sqrt{t^2 - 4 \Delta}$ is one of the two elements of $\mathbb{F}_d$ with the property $\left(\sqrt{t^2 - 4 \Delta}\right)^2 = t^2 - 4 \Delta$ (guaranteed to exist because of the assumption that $t^2 - 4\Delta \in Q$) and where, as usual, division by $2$ means multiplication by the multiplicative inverse of $2$ considered as an element of $\mathbb{F}_d$.  It is easily verified that
\begin{align}
\Delta &= \lambda_{+}\lambda_{-}
\\
t & = \lambda_{+} + \lambda_{-}
\end{align}
This means we can write
\begin{align}
\lambda_{+} & = \theta^r
\\
\lambda_{-} & = \Delta \theta^{-r}
\end{align}
where $r = \log_{\theta} \lambda_{+}$.  The fact that $(\Delta \theta^{-r} -\theta^r)^2 = t^2 - 4 \Delta \in Q$ means that $\Delta \theta^{-r} \neq \theta^r$.  So the matrix
\be
S= \bmt \frac{\theta^{-r}}{\Delta \theta^{-r} -\theta^r} & \frac{1}{\Delta \theta^{-r} -\theta^r} \\ \Delta \theta^r & 1
\emt
\ee
is a well-defined element of $\ESL(2,\mathbb{F}_d)$.  It is straightforward to confirm that 
\be
S \bar{F} S^{-1} = \bmt \theta^r & 0 \\ 0 & \Delta \theta^{-r} \emt
\ee
So the problem of finding the order of $F$ reduces to the problem of finding the smallest positive integer $m$ such that
\be
\bmt \theta^r & 0 \\ 0 & \Delta \theta^{-r} \emt^m = \bmt 1 & 0 \\ 0 & 1\emt
\ee
If $\Delta = 1$ we immediately deduce that
\be
\ord (F) = \frac{d-1}{[r,(d-1)]}
\ee 
where $[r,(d-1)]$ is the greatest common divisor of $r$ and $d-1$.  Suppose, on the other hand, that $\Delta=-1$.  Then the order of $F$ is the smallest even positive integer $m$ such that $\theta^m = 1$.  So
\be
\ord (F) =\begin{cases}  \frac{(d-1)}{[r,(d-1)]}  \qquad & \text{if $\frac{(d-1)}{[r,(d-1)]}$ is even}
\\\vphantom{\biggl(}
\frac{2(d-1)}{[r,(d-1)]}  \qquad & \text{if $\frac{(d-1)}{[r,(d-1)]}$ is odd}
\end{cases}
\ee
We can write this formula more compactly if we observe
\be
\left[r,\frac{d-1}{2}\right] = \begin{cases} [r,d-1] \qquad & \text{if $\frac{(d-1)}{[r,(d-1)]}$ is even}
\\ \vphantom{\biggl(}
\frac{1}{2} [r,d-1] \qquad & \text{if $\frac{(d-1)}{[r,(d-1)]}$ is odd}
\end{cases}
\ee
from which it follows
\be
\ord (F) = \frac{d-1}{[r,\frac{d-1}{2}]}
\ee

Let us note that for all $r$ in the interval $0\le r <d-1$ the matrix
\be
\bar{F} = \bmt 0 & -\Delta \\ 1 & \theta^{r} + \Delta \theta^{-r} \emt
\ee
is a well-defined element of $\ESL(2,\mathbb{F}_d)$.  However, for some values of $r$ the matrix is type $3$ and not diagonalizable.  It is easily seen that the  values of $r$ for which $\bar{F}$ is a type $1$ matrix with eigenvalues $\theta^r$, $\Delta \theta^{-r}$ are
\begin{align}
&\left\{ r\colon 1 \le r < d-1, \  r \neq \frac{d-1}{2} \right\} & & \text{if $\Delta = 1$}
\\
&\left\{ r\colon 1 \le r < d-1, \  r \neq \frac{d-1}{4}, \frac{3(d-1)}{4} \right\}&& \text{if $\Delta = -1$ and $d=1$ (mod $4$)}
\\
&\left\{ r\colon 1 \le r < d-1\right\}&& \text{if $\Delta = -1$ and $d=3$ (mod $4$)}
\end{align}
In particular the maximum order of a type $1$ matrix is $d-1$, the maximum being achieved by (for example) matrices for which $t = \theta+\Delta \theta^{-1}$.  

Suppose, next, that $F$ is type $2$.  Then $F$ is again conjugate to  the matrix~\cite{Flammia,selfC,Hump,Gehles}
\be
\bar{F}= \bmt 0 & -\Delta \\ 1 & t \emt
\ee
However the fact that the discriminant $t^2 - 4 \Delta \in N$ means that the equation
\be
\det(F-\lambda I) = \lambda^2 -t \lambda + \Delta = 0
\label{eq:charEq}
\ee
has no solutions in the field $\mathbb{F}_d$. We can deal with this problem in the same way that we would if we had a matrix defined over the reals whose characteristic equation had a negative discriminant:  namely, we can  go to a suitable extension field, in this case $\mathbb{F}_{d^2}$.  
Let $\bar{\theta}$ be  a primitive element for $\mathbb{F}_{d^2}$.  We make the identification
\be
\mathbb{F}_d = \{x\in \mathbb{F}_{d^2} \colon x^{d-1} = 1\}
\ee
$\mathbb{F}^{*}_d$ thus consists of all powers of $\bar{\theta}^{d+1}$.  It follows that $\bar{\theta}^{d+1}$ is a primitive element for $\mathbb{F}_d$ which, without loss of generality, we can identify with $\theta$.

To define the square root of an arbitrary element $x \in N$ observe that $x = \bar{\theta}^{k (d+1)}$ for some odd integer $k$.  So if we define $\sqrt{x} = \bar{\theta}^{\frac{k(d+1)}{2}}$ we will have $x=\left(\sqrt{x}\right)^2$.  Observe that, since $\bar{\theta}^{\frac{d^2-1}{2}} = -1$ and $k$ is odd,
\be
\left(\sqrt{x}\right)^{d} = \bar{\theta}^{\frac{k(d^2-1)}{2}} \bar{\theta}^{\frac{k(d+1)}{2}} = - \sqrt{x}
\label{eq:rtdPow}
\ee

Now let
\be
\lambda_{\pm} = \frac{t\pm \sqrt{t^2 - 4 \Delta}}{2}
\ee
be the solutions of Eq.~(\ref{eq:charEq}).  Using the fact that $(x\pm y)^d = x^d \pm y^d$ for all $x,y\in\mathbb{F}_{d^2}$ we deduce
\be
\lambda^{d}_{\pm} = \frac{t^d\pm \left(\sqrt{t^2 - 4 \Delta}\right)^d}{2^d}
\ee
In view of Eq.~(\ref{eq:rtdPow}) and the fact that $x^d = x$ for all $x\in \mathbb{F}_d$ it follows
\be
\lambda^{d}_{\pm} = \lambda_{\mp}
\ee
So 
\begin{align}
\lambda_{+} & = \bar{\theta}^{k}
\\
\lambda_{-} & = \bar{\theta}^{kd}
\end{align}
for some integer $k$.  Since $\lambda_{+}\lambda_{-} = \Delta$, and taking account of the fact that $\bar{\theta}^{-\frac{d^2-1}{2}}=-1$, we must have
\be
k(d+1) = \begin{cases} 0 \qquad & \text{if $\Delta=1$}
\\
\frac{d^2-1}{2}  \qquad & \text{if $\Delta=-1$}
\end{cases}
\ee
mod $(d^2-1)$.  So if we define $\eta = \bar{\theta}^{\frac{d-1}{2}}$
\begin{align}
\lambda_{+} & = \eta^r
\\
\lambda_{-} & = \eta^{r d} = (-1)^r \eta^{-r}
\end{align}
for some integer r   which is in the range $1\le r < 2(d+1)$, and which is even if $\Delta =1$ and odd if $\Delta=-1$.  

Since $t^2-4\Delta\neq 0$ the eigenvalues are distinct and so the matrix
\be
S = \bmt \frac{1}{(-1)^r-\eta^{2r}} & \frac{\eta^r}{(-1)^r-\eta^{2r}} \\ (-1)^r \eta^{r} & 1\emt
\ee
is a well-defined element of $\ESL(2, \mathbb{F}_{d^2})$.  It is readily verified that
\be
S \bar{F} S^{-1} = \bmt \eta^r & 0 \\ 0 & (-1)^r \eta^{-r} \emt
\ee
from which we see that the order of $F$ is the smallest positive integer $m$ such that $\eta^{mr}=1$.  Since $\eta^k = 1$ if and only if $k = 0$ (mod $2(d+1)$) we conclude
\be
\ord (F) = \frac{2(d+1)}{[r,2(d+1)]}
\ee

It follows from this that every symplectic matrix has order $\le d+1$, and every anti-symplectic matrix has order $\le 2d+2$.
 Let us  note that there exist both symplectic and anti-symplectic matrices which achieve these upper bounds.  This is a consequence of the more general fact that for every value of $r$ in the interval $1\le r <2(d+1)$ there is a matrix $F\in \ESL(2,\mathbb{F}_d)$ with eigenvalues $\eta^r, (-1)^r \eta^{-r}$.   To prove this observe
\be
\left( \eta^r + (-1)^r \eta^{-r}\right)^d = \eta^{dr } + (-1)^r \eta^{-d r} = \eta^{r}+(-1)^r\eta^{-r } 
\ee
(since $\eta^{dr} = (-1)^r \eta^{-r}$), implying that $\eta^r + (-1)^r \eta^{-r} \in \mathbb{F}_{d}$.  So the matrix
\be
\bar{F}= \bmt 0 & (-1)^{r+1} \\ 1 & \eta^{r} + (-1)^r \eta^{-r} \emt
\ee
 is in $\ESL(2,\mathbb{F}_d)$ and has eigenvalues $\eta^r, (-1)^r \eta^{-r}$.  In particular
\be
\bar{F}=\bmt 0 & 1 \\ -1 & \eta^2 + \eta^{-2} \emt
\ee 
is a symplectic matrix having the maximum order $d+1$ and
\be
\bar{F}=\bmt 0 & 1 \\ 1 & \eta -\eta^{-1} \emt
\ee
is an anti-symplectic matrix having the maximum order $2(d+1)$.  

It remains to consider type $3$ matrices.   A type $3$ matrix is not diagonalizable except in the trivial case when $F = \pm I$.  We have the further  complication  that the conjugacy class of a type $3$ matrix
\be
F = \bmt \alpha & \beta \\ \gamma & \delta \emt
\ee
depends, not only on the values of $t$ and $\Delta$, but also on the values of $\beta$ and $\gamma$.  Finally, we need to consider the cases $d=1$ (mod $4$) and $d=3$ (mod $4$) separately.   Using the results in ref.~\cite{selfC} we give in Table~\ref{tble:type3OrddEq1}  a representative for each of the $12$ conjugacy classes when $d=1$ (mod $4$), while in Table~\ref{tble:type3OrddEq2} we give a representative for each of the $4$ conjugacy classes when $d=3$ (mod $4$).  The last column in each table gives the order of the matrices in the corresponding conjugacy class, calculated using the formula
\be
\bmt x & y \\ 0 & x\emt^r = \bmt x^r & r x^{r-1} y \\ 0 & x^r \emt
\label{eq:type3ordA}
\ee
  
We have thus proved
\begin{theorem}
Let $F$ be any element of $\ESL(2,\mathbb{F}_d)$.  Let $t$ be the trace of $F$ (in the ordinary matrix sense) and let $\Delta$ be the determinant.  Then
\begin{enumerate}
\item If $F$ is type $1$ let 
\be
r = \log_{\theta} \left(\frac{-t + \sqrt{t^2 - 4 \Delta}}{2} 
\right)
\ee
Then the order of $F$ is given by
\be
\ord (F)
=
\begin{cases}
\frac{d-1}{[r,d-1]} \qquad & \text{if $\Delta=1$}
\\ \vphantom{\biggl(}
\frac{d-1}{[r,\frac{d-1}{2}]} \qquad & \text{if $\Delta=-1$}
\end{cases}
\ee
where $[u,v]$ denotes the greatest common divisor of $u$ and $v$.
The allowed values of $r$ are
\begin{align}
&\left\{ r\colon 1 \le r < d-1, \  r \neq \frac{d-1}{2} \right\} & & \text{if $\Delta = 1$}
\\
&\left\{ r\colon 1 \le r < d-1, \  r \neq \frac{d-1}{4}, \frac{3(d-1)}{4} \right\}&& \text{if $\Delta = -1$ and $d=1$ (mod $4$)}
\\
&\left\{ r\colon 1 \le r < d-1\right\}&& \text{if $\Delta = -1$ and $d=3$ (mod $4$)}
\end{align}
In particular the maximum order of a type $1$ matrix is $d-1$, the maximum being achieved by (for example) symplectic matrices for which $t = \theta+ \theta^{-1}$ and anti-symplectic matrices for which $t=\theta-\theta^{-1}$.  
\item If $F$ is type $2$ let $\bar{\theta}$ be a primitive element of $\mathbb{F}_{d^2}$ and let $\eta=\bar{\theta}^{\frac{d-1}{2}}$.  Define
\be
r = \log_{\eta} \left(\frac{-t + \sqrt{t^2 - 4 \Delta}}{2} 
\right)
\ee
Then the order of $F$ is given by
\be
\ord(F) = \frac{2(d+1)}{[r,2(d+1)]}
\ee
Every value of $r$ in the interval $1\le r <2(d+1)$ is possible.  $r$ is even for symplectic  and odd for anti-symplectic matrices.  So the maximum order of a type $2$ symplectic matrix is $d+1$, the maximum being achieved by (for example) matrices for which $t=\eta^2+\eta^{2d}$, while the maximum order for a type $2$ anti-symplectic matrix is $2(d+1)$, the maximum being achieved by  (for example) matrices for which $t=\eta+\eta^d$.  
\item If $F$ is type $3$ the order is as tabulated in Table~\ref{tble:type3OrddEq1} (if $d=1$ (mod $4$)) or Table~\ref{tble:type3OrddEq2} (if $d=3$ (mod $4$)).
\end{enumerate}
\end{theorem}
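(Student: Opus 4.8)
The plan is to reduce, in each of the three cases, the computation of $\ord(F)$ to an elementary question about the multiplicative order of a root of unity in $\mathbb{F}_d$ or $\mathbb{F}_{d^2}$, exploiting the fact that $\mathbb{F}_d$ is a field so that companion matrices and diagonalization behave exactly as over $\mathbb{R}$. Since the order is a conjugacy invariant, the first step in each case is to pass to a convenient conjugate of $F$: the companion matrix $\bigl(\begin{smallmatrix}0 & -\Delta\\ 1 & t\end{smallmatrix}\bigr)$ of the characteristic polynomial $\lambda^2-t\lambda+\Delta$ in types $1$ and $2$ (where the two roots $\lambda_{\pm}=(t\pm\sqrt{t^2-4\Delta})/2$ are distinct), and a canonical Jordan-type representative in type $3$.

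\emph{Type 1.} When $t^2-4\Delta\in Q$ the roots $\lambda_{\pm}$ lie in $\mathbb{F}_d$, and one writes down an explicit $S\in\ESL(2,\mathbb{F}_d)$ conjugating the companion matrix to $\mathrm{diag}(\lambda_+,\lambda_-)$. Setting $\lambda_+=\theta^r$ and using $\lambda_+\lambda_-=\Delta$ to get $\lambda_-=\Delta\theta^{-r}$, the order of $F$ equals the least $m$ with $\mathrm{diag}(\theta^r,\Delta\theta^{-r})^m=I$. For $\Delta=1$ this is the least $m$ with $(d-1)\mid rm$, i.e.\ $(d-1)/[r,d-1]$; for $\Delta=-1$ one additionally needs $m$ even, and the gcd identity relating $[r,(d-1)/2]$ to $[r,d-1]$ (displayed just before the statement) rewrites the answer uniformly as $(d-1)/[r,(d-1)/2]$. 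The admissible range of $r$, and with it the maximal order $d-1$, then follow by pinning down exactly which $r$ keep $t^2-4\Delta$ a nonzero square, treating $\Delta=\pm1$ and the residue of $d$ mod $4$ separately.

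\emph{Type 2.} When $t^2-4\Delta\in N$ the characteristic polynomial is irreducible over $\mathbb{F}_d$, so one passes to $\mathbb{F}_{d^2}$: fix a primitive element $\bar\theta$ with $\bar\theta^{d+1}=\theta$, and define $\sqrt{x}$ for $x\in N$ so that $(\sqrt{x})^d=-\sqrt{x}$. The Frobenius map then forces $\lambda_{\pm}^{d}=\lambda_{\mp}$, hence $\lambda_+=\bar\theta^{\,k}$ and $\lambda_-=\bar\theta^{\,kd}$; the relation $\lambda_+\lambda_-=\Delta$ fixes $k$ mod $d^2-1$, and with $\eta=\bar\theta^{(d-1)/2}$ one gets $\lambda_+=\eta^r$, $\lambda_-=(-1)^r\eta^{-r}$, with $r$ even iff $\Delta=1$. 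Diagonalizing the companion matrix over $\mathbb{F}_{d^2}$ is legitimate because the powers $F^m$ have entries in $\mathbb{F}_d$, so $F^m=I$ over $\mathbb{F}_d$ iff $F^m=I$ over $\mathbb{F}_{d^2}$; hence $\ord(F)$ is the least $m$ with $\eta^{rm}=1$, which is $2(d+1)/[r,2(d+1)]$ since $\eta$ has multiplicative order $2(d+1)$. The parity of $r$ gives the maximal orders $d+1$ and $2(d+1)$, and the companion matrices exhibited earlier realize every value of $r$ in $1\le r<2(d+1)$.

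\emph{Type 3 and the main obstacle.} When $t^2-4\Delta=0$ the matrix is not diagonalizable unless $F=\pm I$, and --- in contrast to types $1$ and $2$ --- its conjugacy class depends on $\beta$ and $\gamma$ as well as on $(t,\Delta)$. Here one invokes the conjugacy-class classification of ref.~\cite{selfC}, which yields $12$ classes when $d\equiv1\pmod4$ and $4$ classes when $d\equiv3\pmod4$; taking the representatives recorded in Tables~\ref{tble:type3OrddEq1} and~\ref{tble:type3OrddEq2} and applying $\bigl(\begin{smallmatrix}x & y\\ 0 & x\end{smallmatrix}\bigr)^{r}=\bigl(\begin{smallmatrix}x^{r} & rx^{r-1}y\\ 0 & x^{r}\end{smallmatrix}\bigr)$ produces each tabulated order, the exponent being forced to annihilate both $x^r$ (hence to be a multiple of the order of $x$ in $\mathbb{F}_d^{*}$, which is $1$ or $2$) and the nilpotent part (hence $p\mid r$ unless $y=0$). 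The substance of types $1$ and $2$ is routine once one grants that finite-field linear algebra behaves like the real case; the real labor lies in type $3$ --- assembling the conjugacy-class list over $\mathbb{F}_d$ with the $d\equiv1$ versus $d\equiv3\pmod4$ split and the dependence on $\beta,\gamma$ --- together with the bookkeeping needed to determine the precise admissible ranges of $r$, and hence the sharp maximal-order statements, in types $1$ and $2$.
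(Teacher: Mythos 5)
Your proposal is correct and follows essentially the same route as the paper's own derivation: conjugation to the companion matrix and diagonalization over $\mathbb{F}_d$ for type $1$, passage to $\mathbb{F}_{d^2}$ with $\eta=\bar{\theta}^{(d-1)/2}$ and the Frobenius relation $\lambda_{\pm}^{d}=\lambda_{\mp}$ for type $2$, and the conjugacy-class tables of ref.~\cite{selfC} together with the upper-triangular power formula for type $3$. The only substantive difference is one of emphasis --- you flag the legitimacy of diagonalizing over the extension field and the type-$3$ bookkeeping as the main points needing care, which the paper treats more tersely --- but the underlying argument is the same.
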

\begin{table}[htb]
\begin{tabular}{ | p{0.5 in} |  p{0.5 in} | p{1 in} | p{1 in} | p{0.5 in}|}
\hline
\bc $\Delta$ \ec & \bc $t$ \ec & \bc $\beta$, $\gamma$ \ec & \bc representative \ec &  \bc  order \ec \\ \hline
\tbins{0.5}{$1$} &  \tbins{0.5}{$2$} & \tbins{1}{$\beta=\gamma=0$}  & \tbins{1}{$\bmt \mspace{6 mu} 1\mspace{6 mu}  & \mspace{6 mu} 0 \mspace{6 mu} \\ 0 & 1 \emt $} &  \tbins{0.5}{1} 
\\
\tbins{0.5}{$1$} &\tbins{0.5}{$2$}& \tbins{1}{$\beta$ or $\gamma\in Q$} &\tbins{1}{$\bmt \mspace{6 mu} 1 & \mspace{6 mu}1\mspace{6 mu} \\ 0 & \mspace{6 mu}1\mspace{6 mu}\emt $} &  \tbins{0.5}{p}
 \\
\tbins{0.5}{$1$} &\tbins{0.5}{$2$}& \tbins{1}{$\beta$ or $\gamma\in N$}  &\tbins{1}{$\bmt \mspace{6 mu} 1 & \mspace{6 mu}\nu\mspace{6 mu} \\ 0 & \mspace{6 mu}1\mspace{6 mu}\emt $} &  \tbins{0.5}{p} 
\\
\tbins{0.5}{$1$} &  \tbins{0.5}{$-2$} & \tbins{1}{$\beta=\gamma=0$}   &\tbins{1}{$\bmt -1 & 0 \\ 0 & -1 \emt $} &  \tbins{0.5}{2} 
\\
\tbins{0.5}{$1$} &\tbins{0.5}{$-2$}& \tbins{1}{$\beta$ or $\gamma\in Q$} &\tbins{1}{$\bmt -1 & 1 \\ 0 & -1 \emt $} &  \tbins{0.5}{2 p} 
\\
\tbins{0.5}{$1$} &\tbins{0.5}{$-2$}& \tbins{1}{$\beta$ or $\gamma\in N$}  &\tbins{1}{$\bmt -1 & \nu \\ 0 & -1 \emt $} &  \tbins{0.5}{2 p} 
\\
\tbins{0.5}{$-1$} & \tbins{0.5}{$2i$}&  \tbins{1}{$\beta=\gamma=0$} &\tbins{1}{$\bmt \mspace{6 mu} i\mspace{6 mu}  & \mspace{6 mu} 0 \mspace{6 mu} \\ 0 & i \emt $} &  \tbins{0.5}{4}
 \\
\tbins{0.5}{$-1$} & \tbins{0.5}{$2i$}&  \tbins{1}{$\beta$ or $\gamma\in Q$}  &\tbins{1}{$\bmt \mspace{6 mu} i & \mspace{6 mu}1\mspace{6 mu} \\ 0 & \mspace{6 mu}i\mspace{6 mu}\emt $} &  \tbins{0.5}{4 p} 
\\
\tbins{0.5}{$-1$} & \tbins{0.5}{$2i$}& \tbins{1}{$\beta$ or $\gamma\in N$} &\tbins{1}{$\bmt \mspace{6 mu} i & \mspace{6 mu}\nu\mspace{6 mu} \\ 0 & \mspace{6 mu}i\mspace{6 mu}\emt $} &  \tbins{0.5}{4 p} 
\\
\tbins{0.5}{$-1$} & \tbins{0.5}{$-2i$}&  \tbins{1}{$\beta=\gamma=0$}  &\tbins{1}{$\bmt -i & 0 \\ 0 & -i \emt $} &  \tbins{0.5}{4}
\\
\tbins{0.5}{$-1$} & \tbins{0.5}{$-2i$}&  \tbins{1}{$\beta$ or $\gamma\in Q$}  &\tbins{1}{$\bmt -i & 1 \\ 0 & -i \emt $} &  \tbins{0.5}{4 p} 
\\
\tbins{0.5}{$-1$} & \tbins{0.5}{$-2i$}& \tbins{1}{$\beta$ or $\gamma\in N$} &\tbins{1}{$\bmt -i & \nu \\ 0 & -i \emt $} &  \tbins{0.5}{4 p}
 \\
\hline
\end{tabular}
\vspace{0.15 in}
\caption{Orders of Type $3$ matrices when $d= 1 \text{ (mod $4$)}$.  Here $\nu$ is any fixed element of $N$, and $i$ is one of the two solutions to the equation $x^2=-1$ (so $i$ does \emph{not} have its usual meaning---for instance if $d=5$ then $i = 2$ or $4$).  Note, also, that it is not possible for $\beta\in Q$ and $\gamma \in N$, or for $\beta\in N$ and $\gamma\in Q$.  For more details see ref.~\cite{selfC}. }
\label{tble:type3OrddEq1}
\end{table}

\begin{table}[htb]
\begin{tabular}{ | p{0.5 in} |  p{0.5 in} | p{1 in} | p{1 in} | p{0.5 in}|}
\hline
\bc $\Delta$ \ec & \bc $t$ \ec & \bc $\beta$, $\gamma$ \ec & \bc representative \ec &  \bc  order \ec \\ \hline
\tbins{0.5}{$1$} & \tbins{0.5}{$2$}& \tbins{1}{$\beta=\gamma = 0$}&\tbins{1}{$\bmt \mspace{6 mu} 1\mspace{6 mu}  & \mspace{6 mu} 0 \mspace{6 mu} \\ 0 & 1 \emt $} &  \tbins{0.5}{$1$}
 \\
\tbins{0.5}{$1$} & \tbins{0.5}{$2$}& \tbins{1}{$\beta$ or $\gamma \neq 0$} &\tbins{1}{$\bmt \mspace{6 mu} 1\mspace{6 mu}  & \mspace{6 mu} 1 \mspace{6 mu} \\ 0 & 1 \emt $} &  \tbins{0.5}{$p$} 
\\
\tbins{0.5}{$1$} & \tbins{0.5}{$-2$}& \tbins{1}{$\beta=\gamma = 0$} &\tbins{1}{$\bmt -1 & 0 \\ 0 & -1 \emt $} &  \tbins{0.5}{$2$} 
\\
\tbins{0.5}{$1$} & \tbins{0.5}{$-2$}& \tbins{1}{$\beta$ or $\gamma \neq 0$} &\tbins{1}{$\bmt -1  & 1\\ 0 & -1 \emt $} &  \tbins{0.5}{$2 p$}
\\
\hline
\end{tabular}
\vspace{0.15 in}
\caption{Orders of Type $3$ matrices when $d= 3 \text{ (mod $4$)}$.  Note that for these values of $d$ there are no type $3$ anti-symplectics.  For more details see ref.~\cite{selfC}. }
\label{tble:type3OrddEq2}
\end{table}

With these results in hand it is straightforward to find the order, eigenvalues and eigenspaces and roots of the unitary/anti-unitary $U_F$ corresponding to an arbitrary matrix $F\in\ESL(2,\mathbb{F}_d)$.  In fact, let $m=\ord(F)$.  Then $m$ is also the order of $U_F$.  So the eigenvalues of $U_F$ are $e^{\frac{2 i r\pi}{m}}$, with $r=0,1,\dots, m-1$.  The projector onto the eigenspace of $U_F$ with eigenvalue $e^{\frac{2 i r\pi}{m}}$ is
\be
P_r = \frac{1}{m} \sum_{s=0}^{m-1} e^{-\frac{2 i r s\pi}{m}}U_F^s
\ee
The dimension of the eigenspace (possibly zero) is $\Tr(P_r)$.  Finally, suppose we want to find the $s^{\mathrm{th}}$ roots of $U_F$:  \emph{i.e.}\ the set of  (anti)-symplectic unitaries $U_G$ (possibly empty) such that $U^s_G = U_F$.  Clearly $U_G$ has this property if and only if $G^s = F$.  Suppose, for the sake of definiteness, that $F$ is  type~$1$, so that
\be
F = S \bmt \theta^r & 0 \\ 0 & (\det F) \theta^{-r} \emt S^{-1}
\ee
for some $S$, $r$.  Then $G$ is an $s^{\mathrm{th}}$ root if and only if
\be
G = S \bmt \theta^{t} & 0 \\ 0 & \Delta \theta^{-t} \emt S^{-1}
\ee
for some $t$, $\Delta$ such that $st = r$ ( mod $d-1$) and $\Delta^s = \det F$.  The problem of finding the roots of a type $2$ or $3$ matrix is equally straightforward.
\section{SIC-POVMs}
\label{sec:SICs}
The results in the last section are potentially relevant to the problem of constructing SIC-POVMs (symmetric informationally complete positive operator valued measures) in prime dimension.  Every known~\cite{ScottGrassl} SIC-POVM which is covariant under the Weyl-Heisenberg group can be constructed from a fiducial vector which is an eigenvector of a symplectic unitary $U_F$ for which $\Tr(F) = -1$ (mod $d$).  It is therefore of some interest to find a natural basis for the appropriate eigenspace.  The results in the last section enable us to do that in every odd prime dimension (since if $d$ is prime the Galoisian variant of the Weyl-Heisenberg Group considered in this paper coincides with the ordinary variant).  

Throughout this section it will be assumed without comment that $d$ is prime number $\ge 5$ (our analysis does not apply to the case $d=3$) .  If $t=\Tr(F) = -1$ and $F$ is symplectic then  $t^2 - 4 \Delta=-3$.  It is a standard result that $-3\in Q$ (respectively $-3\in N$) if $d=1$ (mod $6$) (respectively $d=5$ (mod $6$)) (see, for example, Hardy and Wright~\cite{Hardy}).  So $F$ is type $1$ if $d=1$ (mod $6$) and type $2$ if $d=5$ (mod $6$).  

Consider first the case $d=6m+1 = 1$ (mod $6$).  Without loss of generality we can take $F$ to be the matrix
\be
F = \bmt \theta^{2m} & 0 \\ 0 & \theta^{-2m } \emt
\ee
Define
\be
G = \bmt \theta & 0 \\ 0 & \theta^{-1} \emt
\ee 
Then $U_F = U^{2m}_G$.  $U_G$ is the permutation matrix
\be
U_G =- \sum_{r=0}^{d-1} |\theta r\rangle \langle r |
\ee
Define
\begin{align}
|\psi_r\rangle & = \frac{1}{\sqrt{d-1}} \sum_{s=1}^{d-1} \sigma^{-r \log_{\theta} s} |s\rangle
\\
\intertext{where $\sigma = e^{\frac{\pi i}{3m}}$ and  $r=0,1,\dots, 6m-1$.  Also define}
|\psi'_0\rangle& = |0\rangle
\end{align}
Then it is readily confirmed 
\begin{align}
U_G | \psi_r \rangle &= \sigma^{r} |\psi_r \rangle
\\
\intertext{for $r=0,1,\dots, d-2$, and}
U_G |\psi'_0\rangle & = |\psi'_0\rangle
\end{align}
Consequently
\begin{align}
U_F | \psi_r \rangle &= \lambda^r |\psi_r \rangle
\\
\intertext{for $r=0,1,\dots, 6m-1$ (where $\lambda = e^{\frac{2\pi i}{3}}$), and}
U_F |\psi'_0\rangle & = |\psi'_0\rangle
\end{align}
It appears from Scott and Grassl's exhaustive numerical investigation~\cite{ScottGrassl} that the fiducial lies in the eigenspace of highest dimension, which is the one with eigenvalue $1$ spanned by the $2m+1$ vectors $|\psi'_0\rangle, |\psi_0\rangle,|\psi_3\rangle, \dots , |\psi_{6m-3}\rangle$.

Now consider the case $d=6m-1 = 5$ (mod $6$).  Without loss of generality we can take $F$ to be the matrix
\be
F = \bmt 0 & -1 \\ 1 & -1\emt
\ee
(the Zauner matrix~\cite{Zauner}).  This is a type $2$ matrix, and therefore not diagonalizable.  However, we can still use the techniques described in the last section to find an order $6m$ symplectic matrix $G$ such that $F = G^{2m}$.  It will turn out that the eigenvalues of the unitary $U_G$ are all non-degenerate, which means that it  provides us with a natural basis for the subspace in which the fiducial is hypothesized to lie.  

To see this observe
\be
F = \bmt 1 & -\frac{\eta^{-4m}}{\eta^{4m}-1} \\ - \eta^{4m} & \frac{\eta^{4m}}{\eta^{4m}-1} \emt \bmt \eta^{4m} & 0 \\ 0 & \eta^{-4m} \emt \bmt 1 & -\frac{\eta^{-4m}}{\eta^{4m}-1} \\ - \eta^{4m} & \frac{\eta^{4m}}{\eta^{4m}-1} \emt^{-1}
\ee
So if we define
\begin{align}
G &= \bmt 1 & -\frac{\eta^{-4m}}{\eta^{4m}-1} \\ - \eta^{4m} & \frac{\eta^{4m}}{\eta^{4m}-1} \emt\bmt \eta^{2} & 0 \\ 0 & \eta^{-2} \emt\bmt 1 & -\frac{\eta^{-4m}}{\eta^{4m}-1} \\ - \eta^{4m} & \frac{\eta^{4m}}{\eta^{4m}-1} \emt^{-1}
\nonumber
\\
& = \bmt \gamma_{m+1} & -\gamma_{1} \\ \gamma_1 & \gamma_{m-1} \emt
\label{eq:GdEq5mod6}
\end{align}
where
\be
\gamma_r = \frac{\eta^{2r} - \eta^{-2r}}{\eta^{2m} - \eta^{-2m}}
\ee
the matrix $G$ will have the desired property $F= G^{2m}$.  Note that we can use the same trick to calculate an arbitrary power of $G$:
\begin{align}
G^r &= \bmt 1 & -\frac{\eta^{-4m}}{\eta^{4m}-1} \\ - \eta^{4m} & \frac{\eta^{4m}}{\eta^{4m}-1} \emt\bmt \eta^{2r} & 0 \\ 0 & \eta^{-2r} \emt\bmt 1 & -\frac{\eta^{-4m}}{\eta^{4m}-1} \\ - \eta^{4m} & \frac{\eta^{4m}}{\eta^{4m}-1} \emt^{-1}
\nonumber
\\
& = \bmt \gamma_{m+r} & -\gamma_r \\ \gamma_{r} & \gamma_{m-r} \emt
\end{align}
Note also that, for all $r$,
\begin{align}
\gamma^d_r &= \frac{(\eta^{2  r}  - \eta^{-2 r})^d}{(\eta^{2  m}  - \eta^{-2 m})^d} = \frac{\eta^{2 d r}  - \eta^{-2d r}}{\eta^{2d  m}  - \eta^{-2d m}} = \gamma_r
\end{align}
(since $\eta^d = -\eta^{-1}$).  So even though it is defined in terms of the quantity $\eta$, which lies outside the field $\mathbb{F}_d$, $\gamma_r$ itself lies in the field $\mathbb{F}_d$.  
Finally, it should be observed that the matrix $G$ is not unique. In fact if $k$ is any integer relatively prime to $d^2-1$ then $\bar{\theta}^k$ is also a primitive element of $\mathbb{F}_{d^2}$.  So $\eta$ in the above formulae can be replaced with $\eta^k$, for any such $k$.  For the convenience of the reader one possible choice of the matrix $G$ is listed in Table~\ref{tble:GMatrices} for each of the first $18$ primes $=5$ (mod $6$).

\begin{table}[htb]
\begin{tabular}{|p{0.2 in}p{0.9 in}|p{0.2 in}p{0.9 in}|p{0.2 in}p{0.9 in}|}
\hline
\myStrut
 \tbins{0.2}{$d$} & \tbins{0.9}{$G$} &  \tbins{0.2}{$d$} & \tbins{0.9}{$G$} & \tbins{0.2}{$d$} & \tbins{0.9}{$G$} 
\\
\hline
 \tbins{0.2}{$5$} & \myStrutB \tbins{0.9}{$\bmt 1 & -1 \\  1 & 0 \emt$} & \tbins{0.2}{$47$} & \tbins{0.9}{$\bmt -12 & -3 \\ 3 & -15\emt$} & \tbins{0.2}{$101$} & \tbins{0.9}{$\bmt 9 & 42 \\ -42 & -50\emt$}
 \\
\hline
 \tbins{0.2}{$11$} & \myStrutB \tbins{0.9}{$\bmt -4 & 2 \\ -2 & -2\emt$} &  \tbins{0.2}{$53$} & \tbins{0.9}{$\bmt-22 & 8 \\-8 & -14\emt$} & \tbins{0.2}{$107$} & \tbins{0.9}{$\bmt-47 & 31 \\-31 & -16\emt$}
 \\
\hline
\tbins{0.2}{$17$} & \myStrutB \tbins{0.9}{$\bmt 6 & -8 \\ 8 & -2\emt$} &  \tbins{0.2}{$59$} & \tbins{0.9}{$\bmt -25 & 28 \\ -28 & 3\emt$} & \tbins{0.2}{$113$} & \tbins{0.9}{$\bmt 56 & -47 \\ 47 & 9 \emt$}
 \\
\hline
\tbins{0.2}{$23$} & \myStrutB \tbins{0.9}{$\bmt 11 & -7 \\ 7 & 4\emt$} &  \tbins{0.2}{$71$} & \tbins{0.9}{$\bmt 8 & 34 \\-34 & -29\emt$} & \tbins{0.2}{$131$} & \tbins{0.9}{$\bmt-44 & 11 \\-11 & -33\emt$}
 \\
\hline
 \tbins{0.2}{$29$} & \myStrutB \tbins{0.9}{$\bmt12 & -9 \\ 9 & 3\emt$} &  \tbins{0.2}{$83$} & \tbins{0.9}{$\bmt 18 & -25 \\ 25 & -7\emt$} & \tbins{0.2}{$137$} & \tbins{0.9}{$\bmt -58 & 20 \\-20 & -38\emt$}
 \\
\hline
\tbins{0.2}{$41$}& \myStrutB\tbins{0.9}{$\bmt7 & 11 \\-11 & 18\emt$}&  \tbins{0.2}{$89$} & \tbins{0.9}{$\bmt -25 & -12 \\ 12 & -37\emt$} & \tbins{0.2}{$149$} & \tbins{0.9}{$\bmt47 & -12 \\12 & 35\emt$}
\\
\hline
\end{tabular}
\vspace{0.15 in}
\caption{One possible choice for the matrix $G$ for each of the first $18$ prime dimensions $=5$ (mod $6$).  Note that other choices are possible.}
\label{tble:GMatrices}
\end{table}

Now consider the unitary $U_G$.  Since $U_G$ is order $d+1$ its eigenvalues are powers of $\sigma = e^{\frac{\pi i}{3m}}$.  The  projector onto the eigenspace of $U_G$ with eigenvalue $\sigma^r$ is
\be
P_r = \frac{1}{6m} \sum_{s=0}^{6m-1} \sigma^{-r s} U^s_G
\ee
To find the dimensions of the eigenspaces we need to calculate $\Tr(U^s_G)=\Tr(U_{G^s})$.  This can be done using Theorem~\ref{thm:UtermsD}.  Let $t_s = \Tr(G^s)$.    We have
\be
t_s -2 =  \gamma_{m+s}+\gamma_{m-s} = (\eta^s - \eta^{-s})^2
\ee
So $t_s = 2$ if and only if $s=0$ (mod $6m$).  Also  
\be
(\eta^s - \eta^{-s})^d = \eta^{ds} - \eta^{-ds} = (-1)^{s+1} (\eta^s - \eta^{-s})
\ee
(since $\eta^{ds} = (-1)^s \eta^{-s}$).  So if $s\neq 0$ (mod $6m$) then $\sqrt{t_s-2}\in \mathbb{F}_d$ if and only if $s$ is odd.  Consequently
\be
l(t_s-2) = (-1)^{s+1} =- \sigma^{3ms}
\ee 
for all $s\neq 0$ (mod $6m$).  We also have $\gamma_s = 0$ if and only if $s=0$ (mod $3m$).  So it follows from Theorem~\ref{thm:UtermsD} that
\be
\Tr(U^s_G) = -\sigma^{3ms}
\label{eq:G3sMat}
\ee
if $s\neq 0$ (mod $3m$).  It is easily seen that
\begin{align}
G^{3m} = \bmt -1 & 0 \\ 0 & -1\emt
\end{align}
Taking into account the standard result~\cite{Hardy} 
\be
l(-1) = (-1)^{\frac{d-1}{2}} = -(-1)^m
\ee
we deduce that Eq.~(\ref{eq:G3sMat}) continues to hold when $s= 3m$.  
Consequently
\be
\Tr(P_r)  = \frac{1}{6m}\left(d -\sum_{s=1}^{6m-1} \sigma^{(3m-r)s }\right) = 1- \delta_{r,3m}
\ee
So the eigenspaces are all one dimensional, with the exception of the eigenspace corresponding to $r=3m$ which is zero dimensional.

Let $|\psi_r\rangle$ be the normalized eigenvector corresponding to the eigenvalue $\sigma^r$ (notice that $|\psi_r\rangle$ can be calculated, up to a phase, from the explicit formula we have given for $P_r$).  We  have $U_F |\psi_r\rangle = \lambda^r |\psi_r\rangle$, where $\lambda = \sigma^{2m} = e^{\frac{2 \pi i}{3}}$.  Let $\mathcal{S}_k$ be the eigenspace of $U_F$ with eigenvalue $\lambda^k$.  Then bases and dimensions of these subspaces are as in Table~\ref{tble:zaunerBases}. Scott and Grassl's exhaustive numerical investigation~\cite{ScottGrassl} suggests that fiducial vectors always exist in $\mathcal{S}_1$ and $\mathcal{S}_2$, and that if $m=0$ (mod $3$) they also exist in $\mathcal{S}_0$.
 \vspace{0.15 in}
\begin{table}[htb]
\begin{tabular}{|p{0.8 in}|p{2.7 in}|p{0.8 in}|}
\hline
\myStrut \tbins{1}{subspace}   & \tbins{2.7}{basis} &  \tbins{0.8}{dimension}
\\
\hline
\myStrut \tbins{0.8}{$\mathcal{S}_0$} & \tbins{2.7}{$|\psi_0 \rangle,|\psi_3\rangle, \dots , |\psi_{3m-3}\rangle , |\psi_{3m+3}\rangle, \dots ,|\psi_{6m-3}\rangle$ } &\tbins{0.8}{$2m-1$}
\\
\hline
\myStrut \tbins{0.8}{$\mathcal{S}_1$} & \tbins{2.7}{$|\psi_1 \rangle,|\psi_4\rangle,  \dots ,|\psi_{6m-2}\rangle$ } &\tbins{0.8}{$2m$}
\\
\hline
\myStrut \tbins{0.8}{$\mathcal{S}_2$} & \tbins{2.7}{$|\psi_2 \rangle,|\psi_5\rangle,  \dots ,|\psi_{6m-1}\rangle$ } &\tbins{0.8}{$2m$}
\\
\hline
\end{tabular}
\vspace{0.15 in}
\caption{}
\label{tble:zaunerBases}
\end{table}

\section{Representing the $F$ Matrices as Permutation Matrices}
\label{sec:FsAsPerms}
The matrices $F\in\ESL(2,\mathbb{F})$ permute the vectors $\mathbf{u}\in(\mathbb{F}_d)^2$ which label the displacement operators.  For many purposes it would be convenient if we could label the vectors by an integer $r$, so that $F$ acts according to the rule $F\colon \mathbf{u}_r \to \mathbf{u}_{f(r)}$ for some function $f$ acting on the index (it will, for instance, be convenient in Section~\ref{sec:reLabelling2}).  Of course one could assign the integer label arbitrarily.  However, one would like to find a natural way of doing it, such that the function $f$ takes a simple form for every $F\in \ESL(2,\mathbb{F}_d)$.

We can achieve this by borrowing an idea from quantum optics.  In the analysis of continuous variable systems it is often convenient to replace the two quadratures $q$, $p$ with the single complex variable
\be
\alpha = \frac{1}{\sqrt{2}} (q + i p)
\ee
One can obtain a discrete analogue of this  by letting the extension field  $\mathbb{F}_{d^2}$ play the role of $\mathbb{C}$.  We accordingly define, for each $\mathbf{u} \in \left(\mathbb{F}_d\right)^2$,
\be
x(\mathbf{u}) = \eta^{-1} u_1 + u_2 
\label{eq:uToxDef}
\ee
(one is free to use any basis for  $\mathbb{F}_{d^2}$ in this definition; we choose the  basis $\{\eta^{-1},1\}$  because that will prove convenient in Section~\ref{sec:reLabelling2}).  The map $\mathbf{u} \to x(\mathbf{u})$ is easily seen to be a bijection of $(\mathbb{F}_d)^2$ onto $\mathbb{F}_{d^2}$, with inverse 
\be
x \to \mathbf{u}(x) = 
\bmt \frac{x-x^d}{\eta+\eta^{-1}} \\ \vphantom{\biggl(}\frac{\eta x + \eta^{-1} x^d}{\eta+\eta^{-1}}
\emt
\ee
This suggests that we define
\be
D^{\mathrm{e}}_x = D_{\mathbf{u}(x)}
\ee
(``e'' for ``extension field labelling'').  We then have 
\be
D^{\mathrm{e}}_x D^{\mathrm{e}}_{x'} = \tau^{\langle x,x'\rangle} D^{\mathrm{e}}_{x+x'}
\ee
for all $x$, $x'\in \mathbb{F}_{d^2}$, where
\be
\langle x,x'\rangle = \tr\left(\frac{x^d x'-x {x'}^d}{\eta+\eta^{-1}}\right)
\ee

Let us now consider the action of an arbitrary symplectic/anti-symplectic matrix 
\be
F=\bmt \alpha & \beta \\ \gamma & \delta \emt
\ee
It is readily verified that
\be
U^{\dagger}_F D^{\mathrm{e}}_{x} U^{\dagger}_F = D^{\mathrm{e}}_{a x+ b x^d}
\ee
for all $x$, where 
\begin{align}
a & = \frac{\eta^{-1} \alpha + \beta +  \gamma + \eta \delta}{\eta+\eta^{-1}}
\label{eq:aDef}
\\
b & = \frac{-\eta^{-1} \alpha + \eta^{-2} \beta -  \gamma + \eta^{-1}\delta}{\eta+\eta^{-1}}
\label{eq:bDef}
\end{align}
It is worth examining the map defined by Eqs.~(\ref{eq:aDef}), (\ref{eq:bDef}) in a little more detail.  In the first place observe that it is defined for all matrices $F$ (including the singular ones), and not just for the matrices $\in \ESL(2,\mathbb{F}_d)$.  Furthermore, it is invertible, with inverse
\begin{align}
\alpha & = \frac{\eta^{-1} (a - b^d) +\eta( a^d - b)}{\eta+\eta^{-1}}
\label{eq:permAdef}
\\
\beta & = \frac{(a- b^d)-(a^d -b) }{\eta+\eta^{-1}}
\label{eq:permBdef}
\\
\gamma & = \frac{(a+\eta^{-2} b^d)-(a^d +\eta^2 b) }{\eta+\eta^{-1}}
\label{eq:permCdef}
\\
\delta & = \frac{\eta( a + \eta^{-2} b^d)+\eta^{-1}( a^d +\eta^2 b) }{\eta+\eta^{-1}}
\label{eq:permDdef}
\end{align}
It is thus a bijection of the set of all $2\times 2$ matrices over $\mathbb{F}_d$ onto the set of all pairs $\in \left(\mathbb{F}_{d^2}\right)^2$. It obviously preserves the additive structure.  Moreover, if we equip $\left(\mathbb{F}_{d^2}\right)^2$ with the product rule
\be
(a,b) \circ (a',b') = (a a' + b {b'}^d, a b' + b {a'}^d)
\label{eq:abProdRule}
\ee
it also preserves the multiplicative structure (it is a ring isomorphism, in other words).
Observe, further, that if $F$ is any $2\times 2$ matrix over $\mathbb{F}_d$, and if $(a,b)$ is the corresponding element of $\left( \mathbb{F}_d\right)^2$, then
\be
\det F = a^{d+1}- b^{d+1}
\ee
So we can identify $\ESL(2,\mathbb{F}_d)$ with the set
\be
\{(a,b) \in \left(\mathbb{F}_{d^2}\right)^2 \colon a^{d+1} - b^{d+1} = \pm 1\}
\ee
equipped with the above product rule.  

We are now ready to address the problem with which we started.   For each $0\le r \le d^2-2$ define
\be
\mathbf{u}_r  = \mathbf{u}(\bar{\theta}^{-r})
\ee
(we define it like this, with a negative exponent of $\bar{\theta}$, with a view to later convenience).
We then have
\be
F \mathbf{u}_r = \mathbf{u}_{r-\log_{\bar{\theta}} (a+ b \eta^{-2r})}
\ee
So this gives us a natural way to represent $\ESL(2,\mathbb{F}_d)$ as a group of permutation matrices.  Note that the zero vector is excluded from the labelling scheme; however this does not matter because $\boldsymbol{0}$ is invariant under the action of  $\ESL(2,\mathbb{F}_d)$.

Finally observe that if
\be
A = \bmt 0 & 1 \\ 1 & \eta - \eta^{-1} \emt
\ee  
then 
\be
A \mathbf{u}_r = \mathbf{u}_{r-\frac{d-1}{2}}
\ee
So $A$ is  represented by a shift operator.  This is a consequence of 
choosing the basis $\{\eta^{-1},1\}$ in Eq.~(\ref{eq:uToxDef}). With a different choice of basis for $\mathbb{F}_{d^2}$ one can arrange that any other order $2(d+1)$ anti-symplectic becomes a shift operator.

\section{The MUB Cycling Problem:  Preliminaries}
\label{sec:MUBPrelim}
Wootters and Sussman~\cite{WoottersA} have shown that if $d$ is a power of $2$ there exists a  Clifford unitary which cycles through a full set of mutually unbiassed bases (MUBs).  In the next two sections we address the problem of generalizing their result to the case when $d$ is an odd prime power dimension.

Let us stress that, throughout the remainder of this paper, we use such phrases as ``the full set of  MUBs'' to  mean ``the full set of MUBs generated by acting on the standard basis with elements of the group $\re$'' (which, as we will see, is just the set of MUBs orignally described by Wootters and Fields~\cite{WoottersB}). It is possible to generate other MUBs by acting with the larger group $\ure$.  Also there is, of course, the possibility that there exist full sets of MUBs which are not covariant with respect to any version of the Clifford group.

We will show:
\begin{enumerate}
\item  There is no Clifford unitary which cycles through all the MUBs for any $d$.  
\item If $d=3$ (mod $4$) there is an anti-Clifford anti-unitary which cycles through all the MUBs.
\item For every $d$ the MUBs can be divided into $2$ groups of $\frac{d+1}{2}$ in such a way that there is a  Clifford unitary which cycles through all the MUBs in each group separately. 
\end{enumerate}
Let us stress once again that we are confining ourselves to the unitaries/anti-unitaries $\in \re$. As Gross and Chaturvedi~\cite{DavidSubhash} have observed the considerations which follow are not sufficient to exclude the possibility that there exist unitaries  or (when $d=1$ (mod $4$)) anti-unitaries  $\in \ure$ which cycle through all the MUBs (except, of course, in prime dimension when the restricted and unrestricted groups are identical).

In a subsequent paper  this result will be used to show that minimum uncertainty states~\cite{WoottersA,Sussman,selfA} exist in every odd prime power dimension, thereby extending Wootters and Sussman's result~\cite{WoottersA,Sussman}, that such states exist in every even prime power dimension, and in a certain infinite class of dimensions $=3$ (mod $4$).  

The purpose of this section is to fix notation, and to derive a formula which describes how an arbitrary matrix $F\in\ESL(2,\mathbb{F}_d)$ moves us between the different MUBs.  In the next section we will prove our main result.

Let
\be
F=\bmt \alpha & \beta \\ \gamma & \delta \emt
\ee
be any element of $\SL(2,\mathbb{F}_d)$.  It can be seen from Eq.~(\ref{eq:symRepDef1}) that if $\beta =0$ then $U_F$ simply permutes and re-phases the standard basis.  If, on the other hand, $\beta\neq 0$ every element of $U_F$ has modulus $\frac{1}{\sqrt{d}}$.  So if we define 
\begin{align}
H_{\mu} &= 
\begin{cases}
\bmt 1 & \mu \\ 0 & 1\emt \qquad & \mu \neq \infty
\\ \text{\myStrutB}
\bmt  0 & 1 \\ -1 & 0 \emt \qquad & \mu =\infty
\end{cases}
\\
\intertext{and}
|\mu,x\rangle & = U_{H_\mu} |x\rangle
\end{align}
and allow  $\mu$ to vary over the set $\mathbb{F}_d \cup \{\infty\}$, this will give us a full set of $d+1$ MUBs.  The reader will easily perceive that, \emph{modulo} some differences in the phase and labelling conventions, these are  the MUBs originally described  by Wootters and Fields~\cite{WoottersB}.

Let us now consider the action of the extended Clifford group operators.  For the displacement operators we have
\be
D_{\mathbf{u}} |\mu,x\rangle 
=
\begin{cases}
\tau^{\tr((2x+u_1-\mu u_2)u_2 ))}|\mu,x+u_1-\mu u_2\rangle \qquad & \mu \neq \infty
\\
\tau^{\tr((2x-u_2)u_1)}|\mu,x-u_2\rangle \qquad & \mu = \infty
\end{cases}
\label{eq:DopOnMUBs}
\ee
Turning to the case of the symplectic and anti-symplectic operators let
\be
F = \bmt \alpha & \beta \\ \gamma & \delta \emt 
\ee
be any element of $\ESL(2,\mathbb{F}_d)$, and let $\Delta = \det F$.  Then
\be
F H_{\mu}
=
\begin{cases}
\myStrutB \bmt 1 & \frac{\alpha \mu + \beta}{\gamma \mu + \delta} \\ 0 & 1 \emt \bmt \frac{\Delta}{\gamma \mu +\delta} & 0 \\ \gamma & \gamma \mu +\delta \emt 
\qquad & \text{$\mu\neq \infty$, $\gamma \mu + \delta \neq 0$}
\\ \myStrutB
\bmt 0 & 1 \\ -1 & 0 \emt \bmt -\gamma & 0 \\ \alpha & -\frac{\Delta}{\gamma} \emt
\qquad & \text{$\mu\neq \infty$, $\gamma \mu + \delta =0$}
\\ \myStrutB
\bmt 1 & \frac{\alpha}{\gamma} \\ 0 & 1\emt \bmt \frac{\Delta}{\gamma} & 0 \\ -\delta & \gamma \emt
\qquad & \text{$\mu= \infty$, $\gamma \neq 0$}
\\ \myStrutB
\bmt 0 & 1 \\ -1 & 0\emt \bmt \delta & 0 \\ - \beta & \alpha \emt
\qquad & \text{$\mu= \infty$, $\gamma = 0$}
\end{cases}
\ee
Consequently
\be
U_F |\mu, x\rangle 
=
\begin{cases}
\myStrutC
l\left(\frac{\Delta }{\gamma \mu + \delta} \right) \tau^{\tr\left( \frac{\Delta \gamma x^2 }{\gamma \mu + \delta} \right)} \left| \frac{\alpha \mu + \beta}{\gamma \mu + \delta} , \frac{\Delta x}{\gamma \mu + \delta}
\right>
\qquad & \text{$\mu\neq \infty$, $\gamma \mu + \delta \neq 0$}
\\ \myStrutC
l\left( -\gamma \right) \tau^{-\tr\left(\alpha \gamma x^2 \right)} \left| \infty, -\gamma x\right>
\qquad & \text{$\mu\neq \infty$, $\gamma \mu + \delta =0$}
\\ \myStrutC
l\left(\frac{\Delta}{\gamma}  \right) \tau^{-\tr\left(\frac{\Delta \delta x^2}{\gamma}  \right)} \left| \frac{\alpha }{\gamma}, \frac{\Delta x}{\gamma} \right>
\qquad & \text{$\mu= \infty$, $\gamma \neq 0$}
\\ \myStrutC
l\left(\delta \right) \tau^{-\tr\left(\beta \delta x^2  \right)} \left| \infty, \delta x \right>
\qquad & \text{$\mu= \infty$, $\gamma = 0$}
\end{cases}
\label{eq:FonMUBsWFLabel}
\ee
In particular the Wootters-Fields MUBs are the only MUBs which can be got by acting on the standard basis with the unitaries/anti-unitaries $\in \re$.

\section{The MUB Cycling Problem:  Main Results}
\label{sec:MUBCyclingMain}
We are now ready to prove our main results.
The matrix $H_\mu$ is order $p$ if $\mu \neq \infty$, and order $4$ if $\mu = \infty$.  So if one wants to generate all the MUBs by acting on the standard basis with powers of the operators $U_{H_\mu}$ one needs to use a minimum of $p^{n-1}+1$ different operators.  We will now show that it is possible to generate them using a much smaller number of operators.    Specifically, we will show that in every odd prime power dimension it is possible to generate every MUB by acting on two distinguished bases with powers of a single unitary operator, and that if $d=3$ (mod $4$) one can  generate every MUB by acting on any single  basis with powers of a single anti-unitary operator.  

It can be seen from Eq.~(\ref{eq:DopOnMUBs}) that the displacement operators simply permute the vectors within a basis.  So if one is interested in finding operators which cycle through the largest possible number of different bases one may confine oneself to the symplectic unitaries and anti-symplectic anti-unitaries. 

It follows from Eq.~(\ref{eq:FonMUBsWFLabel}) that, for any $F \in \ESL(2,\mathbb{F}_d)$,
\be
U_F |\mu , x\rangle = e^{i \theta_F(\mu,x)} | f_F(\mu),g_F(\mu,x)\rangle
\ee
for suitable functions $\theta_F$, $f_F$, $g_F$.  We will say that $F$ is a cycling matrix if one gets the complete set of MUBs by acting on any single MUB with powers of $U_F$.     

Let us begin by observing that if $F$ is a cycling matrix, then so is every conjugate of $F$.  In fact suppose that $F$ is cycling, and suppose $F'=S F S^{-1}$ for some $S\in\ESL(2,\mathbb{F}_d)$. The fact that $F$ is cycling means that, for all $
\mu$,  $f_F^m(\mu) = \mu$ if and only if $m=0$ (mod $d+1$).  Since $f_{F'}^m(\mu) = \mu$ if and only if $f_F^m (f_S(\mu)) = f_S(\mu)$ it follows that $f_{F'}$ has the same property, implying that $F'$ is also a cycling matrix.

It will be convenient to define the cycling index $c_F$ to the smallest positive integer $m$ such that $f_F^m (0) = 0$.   In view of Eq.~(\ref{eq:symRepDef1}) $c_F$ is the smallest integer $m$ such that $F^m$ is of the form
\be
F^m = \bmt \alpha & 0 \\ \gamma & \delta\emt
\ee

Since $c_F$ divides the order of $F$, and since the order of every type $1$ matrix is $\le d-1$, we can immediately conclude that no type $1$ matrix is a cycling matrix.  It is also easily from Tables~\ref{tble:type3OrddEq1} and~\ref{tble:type3OrddEq2} together with Eq.~(\ref{eq:type3ordA}) that $c_F = 1$ or $p$ for every type $3$ matrix.  So if cycling matrices exist they have to be type $2$.  Every type $2$ matrix is conjugate to a power of the anti-symplectic matrix
\be
A = \bmt 0 & 1 \\ 1 & \eta-\eta^{-1} \emt
\label{eq:canAdef}
\ee
It is easily seen that if $A^r$ is a cycling matrix then so is $A$.  We conclude that cycling matrices exist if and only if $A$ is a cycling matrix.  

To calculate $c_A$  we use the fact
\be
A= S \bmt \eta & 0 \\ 0 & -\eta^{-1} \emt S^{-1}
\ee
where
\be
S=\bmt 1 & \frac{1}{\eta+\eta^{-1}} \\ \eta & -\frac{\eta^{-1}}{\eta+\eta^{-1}} \emt
\ee
Consequently
\be
A^r  = S \bmt \eta^r & 0 \\ 0 & (-1)^r\eta^{-r} \emt S^{-1}
= \bmt \beta_{r-1} & \beta_r\\ \beta_r & \beta_{r+1}\emt
\ee
where 
\be
\beta_r = \frac{\eta^{dr} - \eta^r}{\eta^d - \eta}
\ee
(where we have used the fact that $\eta^d = -\eta^{-1}$).  So $c_A$ is the smallest positive integer $m$ such that
\be
\eta^{m(d-1)} = 1
\ee
Now $\eta^{m(d-1)}=1$ if and only if $m(d-1) = 0 $ (mod $2(d+1)$).  So $c_A$ is the smallest positive integer $m$ such that
\be
m(d-1) = 2 k(d+1)
\label{eq:cAcalcA}
\ee
for some $k$.  

Suppose that  $d=4u+1$ for some $u$. Then Eq.~(\ref{eq:cAcalcA}) becomes
\be
m u = k (2u+1)
\ee
Since $u$ and $2u+1$ are relatively prime we conclude that $c_A= \frac{d+1}{2}$.  So $A$ is not a cycling matrix.  

Suppose, on the other hand, that $d= 4 u +3$ (mod $4$).  Then Eq.~(\ref{eq:cAcalcA}) becomes
\be
m(2u+1) = 4k(u+1)
\ee
Since $2u+1$ and $4(u+1)$ are relatively prime we conclude $c_A=d+1$. So $A$ is a cycling matrix.

Finally, note that if $A$ is a cycling matrix then $A^r$ is also a cycling matrix if and only if $r$ and $d+1$ are relatively prime.  In particular, the fact that   $d+1$ is even means that even powers of $A$ are never cycling matrices.  So there are no symplectic cycling matrices.

We have thus proved
\begin{theorem}
Let $d$ be any odd prime power dimension.  Then
\begin{enumerate}
\item There are no symplectic cycling matrices.
\item If $d=1$ (mod $4$) there are no anti-symplectic cycling matrices either.
\item If $d=3$ (mod $4$) an anti-symplectic matrix $F$ is a cycling matrix if and only if $\Tr (F) = \eta^{r} -\eta^{-r}$ for some $r$ relatively prime to $d+1$.
\end{enumerate}
\end{theorem}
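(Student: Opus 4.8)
The plan is to run everything through the cycling index $c_F$ introduced above together with the type classification of Section~\ref{sec:orders}. The key reduction is that $F$ is a cycling matrix precisely when $c_F=d+1$: the $f_F$-orbit of $0$ has at most $d+1$ elements (there being only $d+1$ Wootters--Fields MUBs), and if it has exactly $d+1$ then $f_F$ acts as a single $(d+1)$-cycle on the set of MUB labels, so the orbit of every $\mu$ is the full set, which is the definition of cycling; the reverse implication is immediate. I will also use freely that cycling is a conjugacy-invariant property of $F$ (proved above), that $c_F$ divides $\ord(F)$, and the compatibility $U_{F^m}=U_F^m$.

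First I would eliminate types~$1$ and~$3$. If $F$ is type~$1$ then $\ord(F)\le d-1$ by the results of Section~\ref{sec:orders}, so $c_F\le d-1<d+1$ and $F$ is not cycling. If $F$ is type~$3$, Tables~\ref{tble:type3OrddEq1}--\ref{tble:type3OrddEq2} and Eq.~(\ref{eq:type3ordA}) give $c_F\in\{1,p\}$, and $d+1=p^n+1>p$, so again $F$ is not cycling. Hence any cycling matrix is type~$2$. Every type~$2$ matrix is conjugate to a power of the matrix $A$ of Eq.~(\ref{eq:canAdef}); since $A^r$ cycling forces $A$ cycling, and, once $A$ is cycling, $A^r$ is cycling iff $[r,d+1]=1$, the whole question collapses to computing $c_A$. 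Writing $A^m=\bmt \beta_{m-1} & \beta_m \\ \beta_m & \beta_{m+1}\emt$ with $\beta_m=(\eta^{dm}-\eta^m)/(\eta^d-\eta)$, and recalling that $\eta$ has multiplicative order $2(d+1)$ in $\mathbb{F}_{d^2}^{*}$, we find that $c_A$ is the least $m>0$ with $\eta^{m(d-1)}=1$, i.e.\ with $2(d+1)\mid m(d-1)$; the elementary gcd arithmetic gives $c_A=(d+1)/2$ when $d=1$ (mod $4$) and $c_A=d+1$ when $d=3$ (mod $4$). This already yields (1) --- type~$2$ symplectics are conjugate to \emph{even} powers of $A$, and $[2s,d+1]\ge2$ since $d+1$ is even, so no type~$2$ symplectic cycles, while types~$1$ and~$3$ never cycle --- and (2): when $d=1$ (mod $4$), $A$ fails to cycle, hence so does every $A^r$ and every type~$2$ matrix, hence, together with types~$1$ and~$3$, no anti-symplectic cycles.

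For (3) assume $d=3$ (mod $4$), so that $A$ cycles. The cycling anti-symplectic matrices are then exactly the conjugates of those $A^r$ with $[r,d+1]=1$; since $d+1$ is even such an $r$ is odd, whence $\det A^r=(-1)^r=-1$ (consistent with anti-symplecticity) and $\Tr(A^r)=\eta^r+(-1)^r\eta^{-r}=\eta^r-\eta^{-r}$. As trace is a conjugacy invariant, every cycling matrix $F$ satisfies $\Tr(F)=\eta^r-\eta^{-r}$ for some $r$ with $[r,d+1]=1$. Conversely, suppose $F$ is anti-symplectic with $\Tr(F)=\eta^r-\eta^{-r}$ and $[r,d+1]=1$ (so $r$ is odd). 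Its discriminant is $t^2-4\Delta=(\eta^r-\eta^{-r})^2+4=(\eta^r+\eta^{-r})^2$; using $\eta^d=-\eta^{-1}$ one checks $(\eta^r+\eta^{-r})^d=-(\eta^r+\eta^{-r})$, so $\eta^r+\eta^{-r}\notin\mathbb{F}_d$ although its square lies in $\mathbb{F}_d$, forcing $t^2-4\Delta\in N$; thus $F$ is type~$2$. Now $F$ and $A^r$ are type~$2$ matrices with equal trace and determinant, hence both are conjugate over $\mathbb{F}_d$ to the companion-type matrix $\bmt 0 & -\Delta \\ 1 & t\emt$ used in Section~\ref{sec:orders}, hence conjugate to each other, so $F$ cycles.

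The step I expect to require the most care is the converse in (3): one must be sure that the trace, together with $\det F=-1$, genuinely determines the $\ESL(2,\mathbb{F}_d)$-conjugacy class among type~$2$ matrices --- that is, that the conjugation realizing $F\sim A^r$ can be taken over the base field $\mathbb{F}_d$ and not merely over $\mathbb{F}_{d^2}$ --- and one must keep the indexing straight: that $[r,d+1]=1$ automatically forces $r$ odd (so the relevant formula is $\Tr(F)=\eta^r-\eta^{-r}$ rather than $\eta^r+\eta^{-r}$), and that replacing $r$ by $-r$ or by $r$ modulo $d+1$ changes neither the trace condition nor cyclingness. Everything else is the routine gcd computation of $c_A$ and the bookkeeping that collates the three types.
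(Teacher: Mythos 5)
Your proposal is correct and follows essentially the same route as the paper: eliminate types $1$ and $3$ via the cycling index and order bounds, reduce type $2$ to powers of the canonical anti-symplectic $A$, compute $c_A$ from $\eta^{m(d-1)}=1$, and read off part (3) from the trace of $A^r$ together with the fact that trace and determinant fix the conjugacy class of a type $2$ matrix. The only differences are that you make explicit two points the paper leaves implicit --- the equivalence of cycling with $c_F=d+1$ and the base-field conjugacy needed for the converse of (3) --- which is careful bookkeeping rather than a different argument.
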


At this point let us stress once again the point we made earlier, that we are only considering the operators in $\re$.  As Gross and Chaturvedi~\cite{DavidSubhash} have observed, the above considerations do not exclude the possibility that for some or all values of $d$ there is a cycling unitary $\in \urc$; or that for some or all values of  $d=1$ (mod $4$) there is a cycling anti-unitary $\in \ure$.

We now turn our attention to what, in the absence of a cycling matrix, might be considered the ``next best thing'':  namely, a matrix $F$ with the property that the MUBs split into two groups of $\frac{d+1}{2}$, in such a way that $U_F$ cycles through each group separately.  We will refer to such matrices as half-cycling  matrices.  As with cycling matrices, if $F$ is half-cycling, then so is every conjugate of $F$.  

For given $F$ and $\mu$ define $m_F(\mu)$ to be the smallest positive integer $m$ such that $f_F^m(\mu) = \mu$.  Clearly $F$ is a half-cycling matrix if and only if $m_F(\mu) = \frac{d+1}{2}$ for all $\mu$.  

Since $m_F(\mu)$ is a factor of $\ord(F)$ for all $\mu$ no type $1$ matrix can be a half-cycling matrix.  Since $m_F(0)=1$ or $p$ for all type $3$ matrices no type $3$ matrix can be a half-cycling matrix.  So, as with cycling matrices, if half-cycling matrices exist at all they must be type $2$. It follows that every half-cycling matrix must be conjugate to a power of the matrix $A$ defined in Eq.~(\ref{eq:canAdef}).  

To show that $A^r$ is a half-cycling matrix it is enough to show that $m_{A^r}(\mu) =\frac{d+1}{2}$ for more than half the possible values of $\mu$.  So in the following we will assume that $\mu\neq \infty$.  
Observe that if
 $f^m_{A^r}(\mu) = \mu$ we must have 
\begin{align}
\left(U_{A}\right)^{mr} | \mu, x\rangle & = e^{i\theta} | \mu, k x\rangle
\\
\intertext{for some $\theta$, $k$.  Equivalently}
U_{A^{mr}_\mu} | 0, x\rangle & = e^{i\theta} | 0 , k x\rangle
\label{eq:halfCycA}
\end{align}
where
\begin{align}
A^{mr}_\mu & = H^{-1}_{\mu} A^{mr} H_{\mu} 
\nonumber
\\
& = 
\bmt \bigl(\beta_{mr-1} - \mu \beta_{mr}\bigr) & \bigl(- \beta_{mr}\mu^2 +  (\beta_{mr-1}-\beta_{mr+1} )\mu + \beta_{mr}\bigr) \\ \beta_{mr} 
& \bigl(\beta_{mr} \mu + \beta_{mr+1}\bigr)
\emt
\end{align}
Referring to Eq.~(\ref{eq:symRepDef1}) we see that this implies 
\be
\beta_{mr}\mu^2 - (\beta_{mr-1}-\beta_{mr+1} )\mu - \beta_{mr} = 0
\ee
This is a quadratic equation in $\mu$ with discriminant
\be
(\beta_{mr-1}-\beta_{mr+1} )^2 + 4 \beta_{mr}^2 = (\eta^{mr} - (-1)^{mr} \eta^{-mr})^2
\label{eq:halfCycB}
\ee
If the equation has solutions in $\mathbb{F}_d$ the discriminant must $\in Q \cup \{0\}$, implying $\eta^{mr} - (-1)^{mr} \eta^{-mr}\in\mathbb{F}_d$.  But
\be
\left( \eta^{mr} - (-1)^{mr} \eta^{-mr} \right)^d = -\left( \eta^{mr} - (-1)^{mr} \eta^{-mr} \right)
\ee
which means that $\eta^{mr} - (-1)^{mr} \eta^{-mr}\in\mathbb{F}_d$ if and only if $\eta^{mr} - (-1)^{mr}\eta^{-mr}=0$. So $f^{m}_{A^r} (\mu) = \mu$ implies 
\begin{align}
\eta^{2 mr} & = (-1)^{mr}
\\
\intertext{or}
mr(d-1) &= 0 \quad \text{ (mod $2(d+1)$)}
\label{eq:halfCycC}
\end{align}

Suppose, now, that $d=1$ (mod $4$).  Then Eq.~(\ref{eq:halfCycC}) reads
\be
mr\left(\frac{d-1}{4}\right) = 0 \quad \left(\text{mod $\frac{d+1}{2}$}\right)
\ee
Since $\frac{d-1}{4}$ is relatively prime to $\frac{d+1}{2}$ this implies
\be
mr = 0 \quad \left(\text{mod $\frac{d+1}{2}$}\right)
\ee
We have thus shown that a necessary condition for $f^{m}_{A^r}(\mu) = \mu$ is that $mr$ is a multiple of $\frac{d+1}{2}$. But it is readily verified that 
\be
A^{\frac{k(d+1)}{2}}_\mu = \bmt \eta^{\frac{k(d+1)}{2}} & 0 \\ 0 & \eta^\frac{k(d+1)}{2} \emt
\ee
for all $\mu$.  So the condition is not only necessary but also sufficient. It follows that, for all $\mu\neq \infty$,
\be
m_{A^r}(\mu) = \frac{\frac{d+1}{2}}{\left[ r, \frac{d+1}{2}\right]}
\ee
 In particular, $A^r$ is a half-cycling matrix if and only if $r$ is relatively prime to $\frac{d+1}{2}$. 

Suppose, on the other hand, that $d=3$ (mod $4$). Then Eq.~(\ref{eq:halfCycC}) reads
\be
mr\left(\frac{d-1}{2}\right) = 0 \quad \left( \text{mod $d+1$} \right)
\ee
So a necessary condition for $f^{m}_{A^r}(\mu) = \mu$ is that $mr$ is a multiple of $d+1$. But it is readily verified that 
\be
A^{k(d+1)}_\mu = \bmt (-1)^k & 0 \\ 0 & (-1)^k \emt
\ee
irrespective of the value of $\mu$.  So the condition is also sufficient.  It follows that, for all $\mu\neq \infty$,
\be
m_{A^r}(\mu) = \frac{d+1}{\left[ r, d+1\right]}
\ee
In particular, $A^r$ is a half-cycling matrix if and only if $r$ is an even integer such that $\frac{r}{2}$ is relatively prime to $\frac{d+1}{2}$.  

We have thus proved
\begin{theorem}
Let $d$ be any odd prime power dimension.  Then
\begin{enumerate}
\item[(a)] A symplectic matrix $F$ is a half-cycling matrix if and only if $\Tr(F)=\eta^{2r}+\eta^{-2r}$ for some integer $r$ relatively prime to $\frac{d+1}{2}$.
\item[(b)] If $d=1$ (mod $4$) an anti-symplectic matrix $A$ is half-cycling if and only if $\Tr(F) = \eta^{r}-\eta^{-r}$ for some odd integer $r$ relatively prime to $d+1$.  
\item[(c)] If $d=3$ (mod $4$) there are no half-cycling antisymplectic matrices.
\end{enumerate}
\end{theorem}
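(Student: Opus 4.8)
The plan is to assemble the theorem from the conjugacy analysis and cycling‑index computations already carried out in this section. I would start by recording the three facts that do the real work: a half‑cycling matrix is necessarily type $2$ (type $1$ matrices have order $\le d-1<d+1$, and for a type $3$ matrix $m_F(0)\in\{1,p\}$ by Eq.~(\ref{eq:type3ordA})); every type $2$ matrix is conjugate to a power of the anti‑symplectic $A=\bmt 0 & 1 \\ 1 & \eta-\eta^{-1}\emt$; and $A^r=S\bmt \eta^r & 0 \\ 0 & (-1)^r\eta^{-r}\emt S^{-1}$ gives $\det(A^r)=(-1)^r$ and $\Tr(A^r)=\eta^r+(-1)^r\eta^{-r}$. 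Combined with the formulas for $m_{A^r}(\mu)$ derived above, each of (a), (b), (c) becomes a matter of sorting powers of $A$ by the parity of $r$ (which is exactly the sign of the determinant) and then translating the ``$r$ coprime to $\ldots$'' condition into a trace condition. I would then supply the converse directions, which are the only genuinely new content.

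For the forward directions: if $F$ is a half‑cycling symplectic matrix it is conjugate to $A^r$ with $(-1)^r=1$, so $r=2s$; I would reconcile the two residue‑class formulas for $m_{A^{2s}}(\mu)$ --- when $d=1$ (mod $4$) one has $m_{A^{2s}}(\mu)=\frac{(d+1)/2}{[2s,(d+1)/2]}$ with $(d+1)/2$ odd, and when $d=3$ (mod $4$) one has $m_{A^{2s}}(\mu)=\frac{d+1}{[2s,d+1]}=\frac{(d+1)/2}{[s,(d+1)/2]}$ since $4\mid d+1$ --- so that in both cases $F$ is half‑cycling precisely when $s$ is coprime to $(d+1)/2$, and then $\Tr(F)=\eta^{2s}+\eta^{-2s}$. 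This is (a). For (b), with $d=1$ (mod $4$), a half‑cycling anti‑symplectic $F$ is conjugate to $A^r$ with $r$ odd, and $m_{A^r}(\mu)=\frac{(d+1)/2}{[r,(d+1)/2]}$ gives half‑cycling iff $r$ is coprime to $(d+1)/2$, which for odd $r$ is the same as $[r,d+1]=1$, with $\Tr(F)=\eta^r-\eta^{-r}$. For (c), with $d=3$ (mod $4$), such an $F$ would be conjugate to $A^r$ with $r$ odd, and $m_{A^r}(\mu)=\frac{d+1}{[r,d+1]}=\frac{d+1}{2}$ would force $[r,d+1]=2$; but $4\mid d+1$ and $r$ odd make $[r,d+1]$ odd, so no such $F$ exists.

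For the converse directions I would show that the displayed trace value forces the matrix to be type $2$ and to lie in the conjugacy class of the relevant power of $A$. In case (a), $t^2-4\Delta=(\eta^{2r}-\eta^{-2r})^2$; using the characterization $x\in\mathbb{F}_d\iff x^d=x$ together with $\eta^d=-\eta^{-1}$ I get $(\eta^{2r}-\eta^{-2r})^d=-(\eta^{2r}-\eta^{-2r})$, so this square root lies outside $\mathbb{F}_d$, while the coprimality hypothesis rules out $\eta^{2r}-\eta^{-2r}=0$; hence $t^2-4\Delta\in N$ and $F$ is type $2$. Since a type $2$ matrix is conjugate to $\bmt 0 & -\Delta \\ 1 & t\emt$, which depends only on $t$ and $\Delta$, $F$ is conjugate to $A^{2r}$, which is half‑cycling by the forward part. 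Case (b) is the same computation with $t^2-4\Delta=(\eta^r+\eta^{-r})^2$.

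I do not expect a serious obstacle: all the hard work is in the cycling‑index computations preceding the theorem. The points needing care are (1) the parity‑and‑gcd bookkeeping that reconciles the $d\equiv1$ and $d\equiv3$ (mod $4$) formulas so that part (a) reads uniformly, and (2) the non‑vanishing checks $\eta^{2r}-\eta^{-2r}\neq 0$ and $\eta^r+\eta^{-r}\neq 0$ in the converse, which are the only places the coprimality hypotheses are actually used. I would also state explicitly, citing the normal‑form discussion and ref.~\cite{selfC}, that two type $2$ matrices with the same trace and determinant are $\ESL(2,\mathbb{F}_d)$‑conjugate, since the converse argument rests on that.
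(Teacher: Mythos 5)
Your proposal is correct and follows essentially the same route as the paper: both reduce to the facts that a half-cycling matrix must be type $2$, hence conjugate to a power of $A$, and then read off the answer from the formulas $m_{A^r}(\mu)=\frac{(d+1)/2}{[r,(d+1)/2]}$ (for $d=1$ mod $4$) and $m_{A^r}(\mu)=\frac{d+1}{[r,d+1]}$ (for $d=3$ mod $4$) derived just before the theorem. The only difference is that you make explicit the converse step (that the stated trace value forces the matrix into the relevant type $2$ conjugacy class, via the non-vanishing of $\eta^{2r}-\eta^{-2r}$ and the fact that type $2$ matrices with equal trace and determinant are conjugate), which the paper leaves implicit in its earlier normal-form discussion; this is a faithful filling-in rather than a different argument.
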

 
\section{Alternative Labelling Scheme}
\label{sec:reLabelling2}
If we label the MUBs in the way we have been doing so far, in terms of the parameter $\mu$, the cycling structure described in the last section is not very obvious.  We would like to construct a labelling scheme which makes it more explicit.  

This is, in essence, an easy problem to solve:  all we have to do is choose an order $2(d+1)$ anti-symplectic  $A$ and then  label an arbitrary MUB by the number of applications of $U_A$ required to reach it, starting from some distinguished MUB on the same orbit. If $d=3$ (mod $4$) it becomes easier still:  all one has to do is  count the number of applications of $U_A$ needed to reach the MUB of interest starting from the standard basis.  However, there is a slight complication due to the connection between the MUBs and the displacement operators:  each MUB determines a maximal commuting set of displacement operators and conversely (namely, the set of displacement operators which  it diagonalizes).  We would like our labelling scheme to make this relationship transparent.  To do that we will make use of the labelling of the displacement operators described in Section~\ref{sec:FsAsPerms}.

It is easily seen that the maximal commuting sets of displacement operators corresponding to the  Wootters-Fields MUBs are all of the form
\be
\{D_{w \mathbf{u}} \colon w \in \mathbb{F}_d \}
\ee
for some fixed $\mathbf{u}\neq \boldsymbol{0}$.  
In terms of the labelling of the displacement operators described in Section~\ref{sec:FsAsPerms} we have
\be
w \mathbf{u}_r = \mathbf{u}_{r-(d+1) \log_{\theta} w}
\ee
for all non-zero $w\in \mathbb{F}_d$.  This suggests replacing the single  integer $r$ with a pair of indices $s,t$, defined so that $s$ (respectively $t$) is the unique integer in the range $0\le s \le d$  (respectively $0 \le t \le d-2$) such that
\be
r = s+ t (d+1)
\ee
(in other words $t$ and $s$ are, respectively, the quotient and remainder of $r$ on division by $d+1$).  In terms of this double index notation we have
\be
w\mathbf{u}_{s,t} = \mathbf{u}_{s,t-\log_\theta w}
\ee
for all non-zero $w\in \mathbb{F}_d$.  So the maximal sets of commuting displacement operators corresponding to the Wootters-Fields MUBs are the $d+1$ sets 
\be
\mathcal{S}_s= \{ D_{\mathbf{u}_{s,t}}\colon 0\le t \le d-2\}\cup\{ D_{\boldsymbol{0}}\}
\ee
obtained as $s$ ranges over the interval $0\le s \le d$.  Thus, the index $s$ labels the maximal set and the index $t$ labels the individual displacement operators within the set.  

We now define, corresponding to each  $\mathcal{S}_s$, the $d$ operators
\be
P_{s,x} = \frac{1}{d}\left( 1 + \sum_{t=0}^{d-2} \omega^{-\tr(\theta^{-t} x)} D_{\mathbf{u}_{s,t}}
\right)
\label{eq:MUBProjCycDef}
\ee
It is readily confirmed that the $P_{s,x}$ are rank $1$ projection operators and, furthermore, that
\be
\tr\left(P_{s,x} P_{s',x'}
\right)
=
\begin{cases}
\delta_{x, x'} \qquad & s = s'
\\
\frac{1}{d} \qquad & s\neq s'
\end{cases}
\ee
So if we set
\be
P_{s,x} = | s, x\rangle_{\mathrm{c}} \ \mathstrut_{\mathrm{c}}\langle s,x |
\ee
(``c'' for ``cycling'') the vectors $|s,x\rangle_{\mathrm{c}}$ will constitute the full set of MUBs, with $s$ labelling the basis and $x$ labelling the individual vector within the basis.  Moreover, $\{|s,x\rangle\colon x\in\mathbb{F}_d\}$ is the basis which diagonalizes the displacement operators in $\mathcal{S}_s$.

To find the relation between this labelling scheme and the one used previously observe that, in  view of Eq.~(\ref{eq:DopOnMUBs}),
\begin{align}
\langle \mu, y | P_{s,x} | \mu, y\rangle
& = 
\frac{1}{d} \left( 1 + \sum_{t=0}^{d-2} \omega^{-\tr(\theta^{-t} x)} \langle \mu,y| D_{\mathbf{u}_{s,t}} |\mu,y\rangle \right)
\nonumber
\\
& = 
\begin{cases}
\frac{1}{d} \qquad & \mu \neq \mu_s 
\\
\delta_{y,\lambda_s x} \qquad & \mu = \mu_s
\end{cases}
\intertext{implying}
|s,x\rangle_{\mathrm{c}} &\dot{=} |\mu_s, \lambda_s x \rangle
\end{align}
where the notation ``$\dot{=}$'' means ``equals up to a phase'', and where
\begin{align}
\mu_s & = 
\begin{cases}
\frac{\eta^s -\eta^{-s}}{\eta^{s+1} + \eta^{-s-1}}  \qquad & s\neq \frac{d-1}{2}
\\
\infty \qquad & s = \frac{d-1}{2}
\end{cases}
\\
\lambda_s 
& =
\begin{cases}
\frac{\eta+\eta^{-1}}{\eta \bar{\theta}^{-s} + \eta^{-1} \bar{\theta}^{-ds}}
\qquad & s\neq \frac{d-1}{2}
\\
1 \qquad & s= \frac{d-1}{2}
\end{cases}
\end{align}
We also have the inverse relation
\begin{align}
|\mu, x\rangle &\dot{=} |s_\mu, \frac{1}{\lambda_s} x\rangle_{\mathrm{c}}
\intertext{where}
s_{\mu} &= 
\begin{cases}
\frac{1}{2}\log_{\eta} \left( \frac{1+\eta^{-1} \mu}{1-\eta \mu}  \right) \qquad & \mu \neq \infty
\\
\frac{d-1}{2} \qquad & \mu = \infty
\end{cases}
\end{align}
In particular $|0,x\rangle_{\mathrm{c}}$ is the standard basis.

To calculate the action of $U_F$, for arbitrary $F\in\ESL(2,\mathbb{F}_d)$, let $f_F(s)$ be the unique integer in the range $0 \le f_F(s) \le d$ and $g_F(s)$ the unique non-zero element of $\mathbb{F}_d$ such that
\be
s-\log_{\bar{\theta}}(a_F-b_F \eta^{-2s}) = f_F(s) + (d+1) \log_{\theta}\left( g_F(s)\right)
\ee
(so $\log_{\theta}\left(g_F(s)\right)$ is the quotient and $f_F(s)$ is the remainder on division by $d+1$).  Then
\be
F \mathbf{p}_{s,t} = \mathbf{u}_{f_F(s),t+\log_{\theta} \left(g_F(s)\right)}
\ee
In view of Eq.~(\ref{eq:MUBProjCycDef}) this means
\be
U_F P_{s,x}
=
\frac{1}{d} \left(
\sum_{t=0}^{d-2} \omega^{-\tr(\theta^{-t} g_F(s) x) } D_{\mathbf{u}_{f_F(s),t}}
\right)
\ee
implying
\be
U_F | s, x \rangle_{\mathrm{c}} \dot{=} |f_F(s),g_F(s) x\rangle_{\mathrm{c}}
\ee

Finally, let us consider the action of the operators $U_{A}$, $U_{A^2}$.   We have $a_{A} = \eta$, $b_{A}=0$.  In view of Eq.~(\ref{eq:abProdRule}) this means $a_{A^m} = \eta^m$, $b_{A^m}=0$ for all $m$.  So
\be
U_{A^2}|s,x\rangle_{\mathrm{c}} \dot{=} 
\begin{cases}
|s+2,\theta^{-1} x\rangle_{\mathrm{c}} \qquad & s< d-1
\\
|s+2,x\rangle_{\mathrm{c}} \qquad &\text{$s=d-1$ or $d$}
\end{cases}
\ee
(where addition in the first argument is mod $d+1$).  We see that, for all $d$, $U_{A^2}$ cycles through the even-index and odd-index MUBs separately.

Turning to $U_A$ we have
\be
U_A|s,x\rangle_{\mathrm{c}}
\dot{=}
\begin{cases}
\left|s+\frac{d+3}{2}, \theta^{-1} x  \right>_{\mathrm{c}} \qquad & s<\frac{d-1}{2}
\\
\left|s+\frac{d+3}{2}, x  \right>_{\mathrm{c}} \qquad & s\geq\frac{d-1}{2}
\end{cases}
\ee
If $d=1$ (mod $4$) then $\frac{d+3}{2}$ is even, so $U_{A}$, like $U_{A^2}$,  cycles through the even-index and odd-index MUBs separately.  But if $d=3$ (mod $4$) then $\frac{d+3}{2}$ is an odd integer relatively prime to $d+1$, so $U_{A}$ cycles through all the MUBs.  

Lastly, let us note that, although $A$ is a cycling matrix when $d=3$ (mod $4$), it is rather more convenient to use  $A^{\frac{d+3}{2}}$  as this increases the index by $1$ each time:
\be
U_{A}^{\frac{d+3}{2}} |s,x\rangle_{\mathrm{c}}
\dot{=}
\begin{cases}
|s+1,\theta^{-\frac{d+1}{4}}x \rangle_{\mathrm{c}} \qquad & s<d
\\
|s+1,\theta^{-\frac{d-3}{4}}x \rangle_{\mathrm{c}}\qquad & s= d
\end{cases}
\ee
\section{Concluding Remark}
In this paper we have confined ourselves to the group $\re$.  It would obviously be interesting to see how far our results extend to the larger group $\ure$.  It would also be interesting to examine the case of even prime power dimension.  We hope to address these questions in future publications.

\subsubsection*{Acknowledgements}
We are  grateful to Ingemar Bengtsson, Subhash Chaturvedi, Markus Grassl, David Gross, Andrew Scott and Bill Wootters for many stimulating discussions.
  
This research was supported in part by the U. S. Office
of Naval Research (Grant No.\ N00014-09-1-0247). Research
at Perimeter Institute is supported by the Government
of Canada through Industry Canada and by the
Province of Ontario through the Ministry of Research \&
Innovation.

}

\end{document}